\newtheorem{theorem}{Theorem}
\newtheorem{definition}{Definition}
\newtheorem{prop}{Proposition}
\newcommand{\beq}{\begin{eqnarray}}
\newcommand{\eeq}{\end{eqnarray}}
\begin{document}
\title{Smart contracts meet quantum cryptography}

\author{Andrea Coladangelo}

\affil{Computing and Mathematical Sciences,	
Caltech\\ 	
{acoladan@caltech.edu}
}

\date{}

\sloppy

\maketitle

\abstract{We put forward the idea that classical blockchains and smart contracts are potentially useful primitives not only for classical cryptography, but for quantum cryptography as well. Abstractly, a smart contract is a functionality that allows parties to deposit funds, and release them upon fulfillment of algorithmically checkable conditions, and can thus be employed as a formal tool to enforce monetary incentives.
In this work, we give the first example of the use of smart contracts in a quantum setting.

We describe a simple hybrid classical-quantum payment system whose main ingredients are a classical blockchain capable of handling stateful smart contracts, and quantum lightning, a strengthening of public-key quantum money introduced by Zhandry \cite{zhandry2017quantum}. Our hybrid payment system employs quantum states as banknotes and a classical blockchain to settle disputes and to keep track of the valid serial numbers. It has several desirable properties: it is decentralized, requiring no trust in any single entity; payments are as quick as quantum communication, regardless of the total number of users; when a quantum banknote is damaged or lost, the rightful owner can recover the lost value. 
}

\tableofcontents

\newpage


\section{Introduction}
Cryptocurrencies, along with blockchains and smart contracts, have recently risen to popular attention, with the most well-known examples being Bitcoin and Ethereum \cite{bitcoin, ethereum}. This rise in popularity has been accompanied by increasing discussion about the potential for applications of blockchains and smart contracts beyond simple transactions. 

Informally, a blockchain is a public ledger consisting of a sequence of blocks. Each block typically contains information about a set of transactions, and a new block is appended regularly via a consensus mechanism that involves the parties of a network and no a priori trusted authority. A blockchain is endowed with a native currency which is employed in transactions, and whose basic unit is a ``coin''. The simplest type of transaction is a \textit{payment}, which transfers coins from one party to another. However, more general transactions are allowed, which are known as \textit{smart contracts}. These can be thought of as contracts stored on a blockchain, and whose consequences are executed upon fulfilment of algorithmically checkable conditions.

The idea of introducing monetary incentives has led to interesting applications in classical cryptography via the study of rational proofs \cite{azar2012rational, guo2014rational, guo2016rational, chen2016rational, azar2016incentivize}.
Smart contracts on a blockchain make it possible to enforce these monetary incentives in a trust-less way, i.e. without having to trust a central entity, or that the verifier pays the correct rewards to the provers. They have found several applications in the construction of multi-party cryptographic primitives like secure multi-party computation \cite{bentov2014use}, secure lotteries \cite{bentov2014use}, and protocols for playing card-games \cite{bentov2017instantaneous, david2017kaleidoscope, david201821}.

A central issue that needs to be resolved for blockchains to achieve mass-adoption is \textit{scalability}. This refers to the problem of increasing the throughput of transactions (i.e. transactions per second) while maintaining the resources needed for a party to participate in the consensus mechanism approximately constant, and while maintaining security against adversaries that can corrupt constant fractions of the parties in the network. For example, Bitcoin and Ethereum, can currently handle only on the order of $10$ transactions per second (Visa for a comparison handles about $3500$ per second). 

In this work, we show that smart contracts can be combined with tools from quantum cryptography to provide a simple quantum solution to the scalability problem. We clarify that this solution only solves the scalability problem for \textit{payment} transactions, and not for the more general \textit{smart contracts} transactions.

The main quantum ingredient that we employ is a primitive called quantum lightning, introduced by Zhandry \cite{zhandry2017quantum} and inspired by Lutomirski et al.'s notion of collision resistant quantum money \cite{lutomirski2009breaking}. In a public-key quantum money scheme, a bank is entrusted with generating quantum states (we refer to these as quantum banknotes) with an associated serial number, and a public verification procedure allows anyone in possession of the banknote to check its validity. Importantly, trust is placed in the fact that the central bank will not create multiple quantum banknotes with the same serial number. 
A quantum lightning scheme has the additional feature that no generation procedure, not even the honest one (and hence not even a bank!), can produce two valid money states with the same serial number, except with negligible probability. This opens to the possibility of having a completely decentralized quantum money scheme. However, if the system ought to be trust-less, some issues need to be addressed: most importantly, who is allowed to mint money? Who decides which serial numbers are valid? A second kind of issue that one might consider when thinking about a future practical deployment of such a money scheme is that quantum states almost inevitably deteriorate with time. Unless there exists a technology reliable enough to store large states for extended periods of time, quantum money would most likely not be feasible unless there is a mechanism that allows a user to replace deteriorated banknotes with fresh ones, or to recover the value that is lost when a banknote is completely damaged.

Our solution leverages smart contracts to provide such a mechanism, and to address all of the above issues. It constitutes the first example of the use of smart contracts in conjunction with quantum tools. We suspect that this is not an isolated example, but that smart contracts could find many other useful applications in quantum cryptography.

\paragraph{Blockchains and smart contracts as ideal functionalities}
The essential properties of blockchains and smart contracts can be abstracted by modeling them as ideal functionalities in the Universal Composability (UC) framework of Canetti \cite{canetti2001universally}. Such a framework provides both a formal model for multiparty computation, and formal notions of security with strong composability properties. The approach of studying blockchains and smart contracts within such a framework was first proposed by Bentov al. in \cite{bentov2014use}, and explored further in \cite{bentov2017instantaneous}. The main reason why such an approach is desirable is that it abstracts the features of blockchains and smart contracts into building blocks that can be utilized to design more complex protocols in a modular way. The limitation of the works \cite{bentov2014use,bentov2017instantaneous} is that they modify the original model of computation of UC in order to incorporate coins, but they do not prove that a composition theorem holds in this variant. A more natural approach was proposed by Kiayias et al. in \cite{kiayias2016fair}. There, the authors define a global ideal functionality which abstracts the properties of a transaction-ledger and fits into the well-studied framework of Generalized Universal Composability (GUC) of Canetti et al. \cite{canetti2007global}, and thus inherits its composability properties. Other works \cite{garay2015bitcoin, pass2017analysis} have operated within a similar framework, and have focused on analyzing security of the Bitcoin protocol. In particular, to the best of our knowledge, Badertscher et al. \cite{badertscher2017bitcoin} are the first to propose an ideal functionality for a Bitcoin-like transaction ledger within the GUC framework, together with a secure realization on (an abstraction of) the real-world Bitcoin network.

\paragraph{Our contributions} Our contribution is two-fold: first, we introduce a new global ideal functionality in the GUC framework. Like the previously mentioned works, this functionality abstracts the features of a transaction ledger, but, additionally, is capable of handling a particular type of smart contracts called \textit{stateful} smart contracts. Our approach is inspired by the previous work of Bentov et al. \cite{bentov2017instantaneous} (which also considers stateful smart contracts) but differs primarily in that our ideal functionality fits into the framework of GUC, for which composition theorems are known. Our ideal functionality is arguably simpler to analyze and more coarse-grained than the functionalities in \cite{kiayias2016fair} and \cite{badertscher2017bitcoin}, but also stronger in that it handles stateful smart contracts. Since Bitcoin's scripting language does not allow for stateful smart contracts, our functionality is not realizable on the Bitcoin blockchain. We believe, however, that our functionality, or close variations of it, should have a GUC-secure realization on the Ethereum blockchain, but we do not explicitly provide such a realization (as we expect such a realization to require a fine-grained analysis of the workings of Ethereum). Rather, we assume access to such a functionality, and we focus in this work on its application in a quantum setting. We emphasize, however, that the composition theorems of GUC guarantee that one can replace our ideal functionality with any GUC-secure realization of it on any blockchain, while maintaining security of the protocols that employ it. Our analysis is thus agnostic to the particular real-world implementation.

Second, we provide, to the best of our knowledge, the first example of the use of smart contracts in a quantum setting. We design a hybrid classical-quantum payment system that uses quantum states as banknotes and a classical blockchain to settle disputes and to keep track of the valid serial numbers. This has several desirable features:
\begin{itemize}
\item It is decentralized, requiring no trust in any single entity.
\item Payments are as quick as quantum communication, regardless of the total number of users.
\item When a quantum banknote is damaged or lost, the rightful owner can recover the lost value.
\end{itemize}

As mentioned earlier, the main quantum ingredient that we employ is quantum lightning \cite{zhandry2017quantum}, a strengthening of public-key quantum money with the property that no generation procedure (not even the honest one) can create two banknotes with the same serial number except with negligible probability. As suggested by Zhandry, this property is desirable if one wants to design a decentralized payment system, as it prevents anyone from copying banknotes (even a hypothetical bank). However, since the generation procedure is publicly known, there needs to be a mechanism that regulates the generation of new valid banknotes (to prevent parties from continuously generating banknotes).

We show how stateful smart contracts on a classical blockchain can be used to provide such a mechanism. For instance, they allow to easily keep track of a publicly trusted list of valid serial numbers. We elaborate on this. In our payment system, the native coin of a classical blockchain is used as a baseline classical currency. Any party can spend coins on the classical blockchain to add a serial number of their choice to the list of valid serial numbers. More precisely, they can deposit any amount (of their choice) $d$ of coins into an appropriately specified smart contract and set the initial value of a \textit{serial number} state variable to whatever they wish (presumably the serial number of a quantum banknote that they have just generated locally). They have thus effectively added to the blockchain a serial number associated to a quantum banknote that, in virtue of this, we can think of as having ``acquired'' value $d$. Payments are made by transferring quantum banknotes: party $A$, the payer, transfers his quantum banknote with serial number $s$ to party $B$, the payee, and references a smart contract whose serial number state variable is $s$. Party $B$ then locally verifies that the received banknote is valid. The appeal of such a payment protocol is that transactions only involve the payer and the payee: no consensus mechanism is required to validate the payment, and the payment need not be recorded on the blockchain. The latter is only invoked when a banknote is first generated, and in case of a dispute. Thus, throughput of transactions is no longer a concern. Likewise, long waiting times between when a payment is initiated and when it is confirmed are also no longer a concern since payments are verified immediately by the payee.
The full payment system, which is described formally in section \ref{sec: main}, includes a mechanism that allows any party in possession of a quantum banknote to recover the coins deposited in the corresponding smart contract. It also includes a mechanism that allows any honest party who has lost or damaged a valid quantum banknote to change the \textit{serial number} state variable of the corresponding smart contract to a fresh value of their choice.
\vspace{2mm}
We should note that there is currently only one known construction of quantum lightning, by Zhandry \cite{zhandry2017quantum}, who proved its security under a computational assumption related to the multi-collision resistance of certain degree-2 hash functions (which is arguably not very well studied). Zhandry also shows that secure quantum lightning can be constructed using any non-collapsing hash function, but no example of a provably non-collapsing hash function is currently known (for more details we refer to \cite{zhandry2017quantum}). In our work, we assume existence of a secure quantum lightning scheme. Moreover, we note that in order to be realized and to be practical, our payment system requires, among other quantum technologies, the ability to store quantum states for extended periods of time, often referred to as ``quantum memory''. It also requires the ability of each party to send quantum states to any party that it wishes to make payments to. Hence, it is unlikely that our scheme will be realizable in the near future. The second requirement could be achieved, for example, in a network in which any two parties have the ability to request joint EPR pairs (i.e. maximally entangled pairs of qubits), which is one of the primary components of a ``quantum internet'' \cite{wehner2018quantum}. One can view our scheme as a possible use case of a quantum internet.
One drawback (or advantage?) of our payment system is that, much like the classical banknotes that we currently use, it does not provide the payer with a receipt or a proof that the payment has happened. Payments leave no trace on the blockchain. In virtue of this, a malicious party who receives a quantum banknote can always claim that they did not receive it, when in fact they did. This shortcoming can be mitigated by having payers split their payment into quantum banknotes of smaller value, and requiring the payee to provide receipts for previous parts of the payment, before proceeding with the next. (Or alternatively, payers can just be careful with who they send their quantum banknotes to.)

Finally, we point out that, while it is the first one to use quantum resources, our proposal is not the first that tries to increase the transaction throughput by designing a scheme in which most of the transactions happen off-chain. Most notably, there are two completely classical proposals, called Lightning Network \cite{lightningNetwork2016} (where the terminology is unrelated to quantum lightning!) and Raiden Network, whose aim is to increase throughput of transactions by having most of the them happen off-chain. While promising proposals, their feasibility is much more complex to analyze. As of April 2019, the Lightning Network is being tested at a small scale, and it is still unclear if such a proposal will be practical at full scale. One example of a drawback of the Lightning Network in its peer-to-peer form is that it requires users to deposit into several smart contracts, and for an extended period of time, more coins than they actually expect to spend. 

\paragraph{Outline} Section \ref{sec: prelim} covers preliminaries: \ref{sec: 2-1} covers basic notation; \ref{sec: lightning} introduces quantum lightning; \ref{sec: uc} gives a concise overview of the Universal Composability framework of Canetti \cite{canetti2001universally}. Section \ref{sec: blockchains} gives first an informal description of blockchains and smart contracts, followed by a formal definition of our global ideal functionality for a transaction ledger that handles stateful smart contracts. Section \ref{sec: main} describes our payment system. In section \ref{sec: security}, we describe an adversarial model, and then prove security guarantees with respect to it. 

\section{Preliminaries} 
\label{sec: prelim}

\subsection{Basic notation}
\label{sec: 2-1}
For a function $f: \mathbb{N} \rightarrow \mathbb{R}$, we say that $f$ is \textit{negligible}, and we write $f(n) = negl(n)$, if for any positive polynomial $p(n)$ and all sufficiently large $n$'s, $f(n) < \frac{1}{p(n)}$. A \textit{binary} random variable is a random variable over $\{0,1\}$. We say that two ensembles of binary random variables $\{X_n\}$ and $\{Y_n\}$ are \textit{indistinguishable} if, 
$$\left| \, \Pr[X_n = 1] - \Pr[Y_n = 1]\,\right| = negl(n).$$ We use the term PPT as an abbreviation for probabilistic polynomial time.


\subsection{Quantum money and quantum lightning}
\label{sec: lightning}
Quantum money is a theoretical form of payment first proposed by Wiesner \cite{wiesner1983conjugate}, which replaces physical banknotes with quantum states. In essence, a quantum money scheme consists of a generation procedure, which mints banknotes, and a verification procedure, which verifies the validity of minted banknotes. A banknote consists of a quantum state together with an associated serial number. The appeal of quantum money comes primarily from a fundamental theorem in quantum theory, the No-Cloning theorem, which informally states that there does not exist a quantum operation that can clone arbitrary states. A second appealing property of quantum money, which is not celebrated nearly as much as the first, is that quantum money can be transferred almost instantaneously (by quantum teleportation for example). The first proposals for quantum money schemes required a central bank to carry out both the generation and verification procedures. The idea of public key quantum money was later formalized by Aaronson \cite{aaronson2009quantum}. In public-key quantum money, the verification procedure is public, meaning that anyone with access to a quantum banknote can verify its validity.

In this section, we focus on quantum lightning, a primitive recently proposed by Zhandry \cite{zhandry2017quantum},  and we enhance this to a decentralized quantum payment system. Informally, a quantum lightning scheme is a strengthening of public-key quantum money. It consists of a public generation procedure and a public verification procedure which satisfy the following two properties:
\begin{itemize}
\item Any quantum banknote generated by the honest generation procedure is accepted with probability negligibly close to $1$ by the verification procedure.
\item No adversarial generation procedure (not even the honest one) can generate two banknotes with the same serial number which both pass the verification procedure with non-negligible probability.
\end{itemize}

As mentioned earlier, there is only one known construction of quantum lightning, by Zhandry \cite{zhandry2017quantum}, who gives a construction which is secure under a computational assumption related to the multi-collision resistance of some degree-2 hash function. Zhandry also proves that any non-collapsing hash function can be used to construct quantum lightning. However, to the best of our knowledge, there are no known hash functions that are proven to be non-collapsing. In this section, we define quantum lightning formally, but we do not discuss any possible construction. Rather, in section \ref{sec: main}, we will use quantum lightning as an off-the-shelf primitive.
\begin{definition}[Quantum lightning \cite{zhandry2017quantum}]
A quantum lightning scheme consists of a PPT setup procedure $\textsf{QL.Setup}(1^\lambda)$ (where $\lambda$ is a security parameter) which samples a pair of polynomial-time quantum algorithms $(\textnormal{Gen}$, $\textnormal{Ver})$. $\textnormal{Gen}$ samples states $\ket{\psi} \in \mathcal{H}_{\$}$, which we refer to as ``bolts''. $\textnormal{Ver}$ takes as input a state in $\mathcal{H}_{\$}$ and outputs either $s \in \{0,1\}^{\lambda}$ (the ``serial number'') or $\perp$. They satisfy the following:
\begin{itemize}
    \item Let $\ket\psi$ be any bolt generated by $\textnormal{Gen}$, and let $$H_{min}(\ket{\psi}, \textnormal{Ver}) := -\log_2 \max_s \Pr[\textnormal{Ver}(\ket{\psi}) = s].$$ Then, $$\mathbb{E}[H_{min}(\ket{\psi}, \textnormal{Ver})] = \textnormal{negl}(\lambda),$$
    where the expectation is taken over $(\textnormal{Gen}, \textnormal{Ver}) \leftarrow \textsf{QL.Setup}(1^\lambda)$ and $\ket{\psi} \leftarrow \textnormal{Gen}$.
\end{itemize}
\end{definition}
The latter requirement simply asks that for any honestly generated bolt, there is a serial number $s$ which, with overwhelming probability, is the output of the verification procedure applied to the bolt.
For security, we require that no adversarial generation procedure can produce two bolts with the same serial number. Formally, we define security via the following game between a challenger and an adversary $\mathcal{A}$.

\begin{itemize}
    \item The challenger runs $(\textnormal{Gen}, \textnormal{Ver}) \leftarrow \textsf{QL.Setup}(\lambda)$ and sends $(\textnormal{Gen}, \textnormal{Ver})$ to $\mathcal{A}$.
    \item $\mathcal{A}$ produces a state $\ket{\Psi_{12}} \in \mathcal{H}_{\$}^{\otimes 2}$.
    \item The challenger runs $\textnormal{Ver} \otimes \textnormal{Ver}$ on $\ket{\Psi_{12}}$ obtaining serial numbers $s_0, s_1$. The output of the game is $1$ if $s_0 = s_1 \neq \perp$, and $0$ otherwise.
\end{itemize}

We let $\textnormal{Copy}(\lambda, \mathcal{A})$ be the random variable which denotes the output of the game.

\begin{definition}[Security \cite{zhandry2017quantum}]
A quantum lightning scheme is secure if, for all polynomial-time quantum adversaries $\mathcal{A}$, $$Pr[\textnormal{Copy}(\lambda, \mathcal{A}) = 1 ] = \textnormal{negl}(\lambda).$$
\end{definition}

We define an additional property of a quantum lightning scheme, which in essence establishes that one can trade a quantum banknote for some useful classical certificate. Intuitively, this is meant to capture the fact that in the construction of quantum lightning based on non-collapsing hash functions proposed by Zhandry, one can measure a bolt with serial number $y$ in the computational basis to obtain a pre-image of $y$ under the hash function. However, doing so damages the bolt so that it will no longer pass verification. In order to define this additional property, we change the procedure $\textsf{QL.Setup}(1^{\lambda})$ slightly, so that it outputs a triple $(\textnormal{Gen}, \textnormal{Ver}, H)$ where $H: \{0,1\}^{l(\lambda)}\rightarrow \{0,1\}^{\lambda} \cup \{\perp\}$ (for some polynomial $l(\lambda)$). The additional property is defined based on the following two games $\textsf{Game}_1$, $\textsf{Game}_2$ between a challenger and an adversary $\mathcal{A}$. 

\noindent In $\textsf{Game}_1$:
\begin{itemize}
    \item The challenger runs $(\textnormal{Gen}, \textnormal{Ver}, H) \leftarrow \textsf{QL.Setup}(1^\lambda)$ and samples $\ket{\psi}\leftarrow \textnormal{Gen}$. Let $s := \textnormal{Ver}(\ket{\psi})$. The challenger sends $(\textnormal{Gen}, \textnormal{Ver}, H, \ket{\psi})$ to $\mathcal{A}$.
    \item $\mathcal{A}$ returns $x \in \{0,1\}^{\lambda}$ and $\ket{\psi'}$.
     \item $\mathcal{A}$ wins if $H(x) = s$.
\end{itemize}

\noindent In $\textsf{Game}_2$:
\begin{itemize}
    \item The challenger runs $(\textnormal{Gen}, \textnormal{Ver}, H) \leftarrow \textsf{QL.Setup}(1^\lambda)$ and sends $(\textnormal{Gen}, \textnormal{Ver}, H)$ to $\mathcal{A}$.
    \item $\mathcal{A}$ returns $x \in \{0,1\}^{\lambda}$ and $\ket{\psi}$.
    \item $\mathcal{A}$ wins if $H(x) = \textnormal{Ver}(\ket{\psi})$.
\end{itemize}

Let $\textsf{Game}_1(\mathcal{A}, \lambda)$ (resp. $\textsf{Game}_2(\mathcal{A}, \lambda)$) be the random variable that is $1$ if $\mathcal{A}$ wins $\textsf{Game}_1$ (resp. $\textsf{Game}_2(\mathcal{A}, \lambda)$), and is $0$ otherwise.

\begin{definition}[Trading the bolt for a classical certificate]
\label{def: extra property}
Let $\lambda \in \mathbb{N}$. We say that a quantum lightning scheme has ``bolt-to-certificate capability'' if:

\begin{itemize}
\item[(I)] There exists a polynomial-time quantum algorithm $\mathcal{A}_*$ such that $$\Pr[\textsf{Game}_1(\mathcal{A}_*, \lambda)= 1] = 1-negl(\lambda)$$
\item[(II)] For all polynomial-time quantum algorithms $\mathcal{A}$, $$ \Pr[ \textsf{Game}_{12}(\mathcal{A}, \lambda)= 1 ] = negl(\lambda).$$
\end{itemize}
\end{definition}

Notice that property $(II)$ also implies that for most $H$ and $s$ it is hard for any adversary to find $x$ such that $H(x) = s$ without access to a valid state whose serial number is $s$. In fact, if there was an adversary $\mathcal{A}$ which suceeded at that, this could clearly be used to to construct an adversary $\mathcal{A'}$ that succeeds in $\textsf{Game}_{2}$: Upon receiving $(\textnormal{Gen}, \textnormal{Ver}, H)$ from the challenger, $\mathcal{A}'$ computes generates $\ket{\psi} \leftarrow Gen$ and computes $s:= Ver(\ket{\psi})$; then runs $\mathcal{A}$ on input $s$ to obtain some $x$. $\mathcal{A}'$ returns $x$ and $\ket{\psi}$ to the challenger. We emphasize that in $\textsf{Game}_2$ it is the adversary himself who generates the state $\ket{\psi}$. This is important because when we employ the quantum lightning scheme later on parties are allowed to generate their own quantum banknotes.

\begin{prop}
\label{prop: extra property 1}
Any scheme that uses Zhandry's construction \cite{zhandry2017quantum} instantiated with a non-collapsing hash function satisfies the property of Definition \ref{def: extra property}. 
\end{prop}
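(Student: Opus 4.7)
The plan is to exploit the specific structure of Zhandry's construction based on non-collapsing hash functions, in which an honestly generated bolt with serial number $s$ is (close to) the uniform superposition $\ket{\psi_s}\propto\sum_{x\in H^{-1}(s)}\ket{x}$ over preimages, and the verification procedure $\textnormal{Ver}$ coherently computes $H$ into an auxiliary register and measures that register to read off $s$, with no further projection onto the subspace of genuine superpositions over preimages (the \emph{non-collapsing} assumption is precisely what prevents such a projection from being implementable).

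For property (I), I would take $\mathcal{A}_*$ to be the algorithm that, on input $(\textnormal{Gen},\textnormal{Ver},H,\ket\psi)$, measures $\ket\psi$ in the computational basis, obtains an outcome $x$, and returns $x$ together with any fixed state, say $\ket 0$. The min-entropy condition $\mathbb{E}[H_{\min}(\ket\psi,\textnormal{Ver})]=\textnormal{negl}(\lambda)$ from the definition of quantum lightning ensures that, with probability $1-\textnormal{negl}(\lambda)$, the bolt is supported (up to negligible error) in the span of $H^{-1}(s)$, so the measurement yields an $x$ with $H(x)=s$ and $\mathcal{A}_*$ wins $\textsf{Game}_1$ with overwhelming probability.

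For property (II), I would argue by reduction to the copy-resistance security of the underlying quantum lightning scheme. Suppose a PPT adversary $\mathcal{A}$ wins $\textsf{Game}_2$ with non-negligible probability $\varepsilon$. I would build an adversary $\mathcal{A}'$ against quantum lightning as follows: on input $(\textnormal{Gen},\textnormal{Ver})$, $\mathcal{A}'$ reconstructs $H$ from the public description of the verifier's circuit (in Zhandry's scheme $H$ is embedded there), runs $\mathcal{A}$ on $(\textnormal{Gen},\textnormal{Ver},H)$ to obtain $(x,\ket\psi)$ with $H(x)=\textnormal{Ver}(\ket\psi)=:s\neq\perp$, prepares the computational basis state $\ket{x}$ in a fresh register, and outputs $\ket{\Psi_{12}}:=\ket{x}\otimes\ket\psi$. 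Because $\textnormal{Ver}$ acts by coherent evaluation of $H$ followed by measurement of the hash-output register, $\textnormal{Ver}(\ket{x})=H(x)=s$ with certainty, so both halves pass verification with the same serial number $s\neq\perp$ and $\mathcal{A}'$ wins the copy game with probability $\varepsilon$, contradicting security.

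The main obstacle is not a technical calculation but rather establishing carefully the structural fact that in the non-collapsing-based construction every computational basis state $\ket{x}$ is itself a valid bolt with serial number $H(x)$. This is what lets us ``upgrade'' the classical certificate $x$ returned by the $\textsf{Game}_2$ adversary into a second bolt in the reduction for (II), and it is precisely the feature that would fail for a hypothetical collapsing verifier. Once this is verified by inspection of Zhandry's construction to confirm that $\textnormal{Ver}$ performs no additional projection beyond the coherent hash evaluation, both parts of the proposition follow from the short arguments sketched above.
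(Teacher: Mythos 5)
Your algorithm for property (I) is essentially the paper's (measure the bolt in the computational basis and return the resulting preimages), but your reduction for property (II) has a genuine gap. It hinges on the structural claim that, in Zhandry's non-collapsing-based construction, every computational basis state $\ket{x}$ is itself a valid bolt with serial number $H(x)$, i.e.\ that $\textnormal{Ver}$ is nothing more than a coherent evaluation of $H$ followed by a measurement of the hash-output register. That claim is false, and it \emph{must} be false for the proposition to be non-vacuous: if $\textnormal{Ver}(\ket{x}) = H(x)$ held with certainty for every basis state $x$, then the adversary that picks an arbitrary $x$ and outputs $\ket{x}\otimes\ket{x}$ would win the $\textnormal{Copy}$ game with probability $1$, so the scheme would not be a secure quantum lightning scheme in the first place. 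In Zhandry's construction, $\textnormal{Ver}$ performs, after reading off the serial number, an additional test built from the non-collapsing property whose purpose is precisely to reject states that have collapsed to (or near) a single preimage; the analysis in \cite{zhandry2017quantum} shows that any state accepted with non-negligible probability must have non-negligible weight on at least two preimages in some register. A lone basis state $\ket{x}$ is exactly what this test rejects, so your candidate clone $\ket{x}\otimes\ket{\psi}$ does not win the copy game, and the "verification by inspection" you defer at the end would come back negative.

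The paper's proof of (II) therefore goes in a different direction: it reduces to \emph{collision resistance} of the underlying non-collapsing hash function $\tilde{H}$ (where $H = \tilde{H}^{\times n}$), not to copy-security of the lightning scheme. Given an adversary that outputs $x = (x_1,\dots,x_n)$ together with a state $\ket{\psi'}$ that still passes $\textnormal{Ver}$ with serial number $s = H(x)$ with non-negligible probability, the reduction measures each register of $\ket{\psi'}$ in the computational basis to obtain $x' = (x'_1,\dots,x'_n)$. Since an accepted state must place non-negligible weight on at least two preimages in some register, with non-negligible probability there is an index $i$ with $x'_i \neq x_i$ and $\tilde{H}(x'_i) = \tilde{H}(x_i) = s_i$, which is a collision. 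In other words, the structural fact you flagged as the "main obstacle" is not a detail to be checked but the point at which the argument has to change target: the second preimage is extracted from the adversary's surviving quantum state, not manufactured from the classical certificate.
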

\begin{proof}
In Zhandry's construction based on a non-collapsing hash function, $\textsf{QL.Setup}(1^{\lambda})$ outputs $(\textnormal{Gen}, \textnormal{Ver}, H)$, where $n \in \mathbb{N}$ is polynomial in the security parameter $\lambda$, and $H = \tilde{H}^{\times n}$ is the $n$-fold product (or concatenation) of a non-collapsing hash-function $\tilde{H}$, i.e. $H(x_1, ..,x_n) := (\tilde{H}(x_1),..,\tilde{H}(x_n))$. A bolt generated from \textnormal{Gen} has the form $\ket{\Psi} = \bigotimes_{i=1}^n \ket{\psi_{y_i}}$, where $y_i \in \{0,1\}^{\lambda}$ for all $i$, and $\ket{\psi_{y_i}} = \sum_{x: \tilde{H}(x) = y_i} \ket{x}$, and \textnormal{Ver} has the form of $n$-fold product of a verification procedure $\textnormal{Mini-Ver}$ which acts on a single register. 

To show that $(I)$ holds, take $\mathcal{A}_*$ to be the following: $\mathcal{A}_*$ measures each register of the received state in the computational basis to obtain pre-images $x_i$, for $i=1,..,n$. Then, returns $x = (x_1,..,x_n)$ to the challenger (notice that such $\mathcal{A}_*$ clearly destroys the received state, and does not help in winning $\textsf{Game}_{12}$). For property $(II)$, suppose there exists $\mathcal{A}$ such that $\Pr[\textsf{Game}_{12}(\mathcal{A}, \lambda)= 1]$ is non-negligible. We use $\mathcal{A}$ to construct an adversary $\mathcal{A}'$ that breaks collision-resistance of a non-collapsing hash function as follows: $\mathcal{A}'$ runs $(\textnormal{Gen}, \textnormal{Ver}, H) \leftarrow \textsf{QL.Setup}(1^\lambda)$ (where $H = \tilde{H}^{\times n}$ for some non-collapsing hash function $\tilde{H}$), and samples $\ket{\psi}\leftarrow \textnormal{Gen}$. $\mathcal{A}'$ gives $(\textnormal{Gen}, \textnormal{Ver}, H, \ket{\psi})$ as input to $\mathcal{A}$. $\mathcal{A}$ returns $x$ which is parsed as $(x_1,..,x_n)$ and $\ket{\psi'}$. $\mathcal{A}'$ then measures each of the $n$ registers of $\ket{\psi'}$ to get $x' = (x'_1,..,x'_n)$. If $x_i \neq x_i'$, then $\mathcal{A}'$ outputs $(x_i, x_i')$. With non-neglibile probability this pair is a collision for $H$: in fact, since $\mathcal{A}$ wins $\textsf{Game}_{12}$ with non-negligible probability, then $\ket{\psi'}$ must pass Ver with non-negligible probability; and from the analysis of \cite{zhandry2017quantum}, any such state must be such that at least one of the registers is in a superposition which has non-negligible weight on at least two pre-images.
\end{proof}

\begin{prop}
\label{prop: extra property 2}
Zhandry's construction based on the multi-collision resistance of certain degree-2 hash functions (from section 6 of \cite{zhandry2017quantum}) satisfies the property of Definition \ref{def: extra property}. 
\end{prop}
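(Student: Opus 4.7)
The plan is to mirror the argument in the proof of Proposition \ref{prop: extra property 1}, replacing the structural fact about the $n$-fold product construction with the analogous structural fact that underpins Zhandry's degree-2 hash function construction (section 6 of \cite{zhandry2017quantum}). Recall that in this construction, $H$ is a degree-2 polynomial map, $\textsf{QL.Setup}(1^{\lambda})$ outputs $(\textnormal{Gen}, \textnormal{Ver}, H)$, and an honestly generated bolt $\ket{\psi}$ with serial number $s$ is, up to negligible error, a uniform superposition over pre-images of $s$ under $H$, with $\textnormal{Ver}$ reading off $s$ from such a state. The security of the scheme itself is proven by a reduction to the multi-collision resistance of $H$, and we will reuse the same kind of structural consequence.

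For property $(I)$, I would take $\mathcal{A}_*$ to be the algorithm that, on input $(\textnormal{Gen}, \textnormal{Ver}, H, \ket{\psi})$, simply measures $\ket{\psi}$ in the computational basis to obtain $x$, and returns $(x, \ket{\psi'})$ with $\ket{\psi'}$ an arbitrary residual state. Because $\ket{\psi}$ is supported, up to negligible error, only on strings $x$ with $H(x) = s := \textnormal{Ver}(\ket{\psi})$, this $x$ satisfies $H(x) = s$ with probability $1 - \textnormal{negl}(\lambda)$, in direct analogy with the proof of Proposition \ref{prop: extra property 1}. Note that $\mathcal{A}_*$ completely destroys the bolt and therefore gains no advantage in $\textsf{Game}_2$.

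For property $(II)$, I would reduce to the multi-collision resistance of $H$. Assume for contradiction that an adversary $\mathcal{A}$ wins $\textsf{Game}_2$ with non-negligible probability. Build $\mathcal{A}'$ which samples $(\textnormal{Gen}, \textnormal{Ver}, H)$, runs $\mathcal{A}$ to obtain $(x, \ket{\psi})$, and then measures $\ket{\psi}$ in the computational basis to obtain $x'$. Whenever $\mathcal{A}$ wins, $H(x) = \textnormal{Ver}(\ket{\psi}) =: s$, so $x$ is a pre-image of $s$; the structural property from Zhandry's security proof is that any state passing verification with output $s$ at non-negligible probability must be supported on a large set of distinct pre-images of $s$. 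Therefore $x'$ is also a pre-image of $s$ and, with non-negligible probability, $x' \neq x$, yielding a $2$-collision for $H$. If the underlying assumption is $k$-multi-collision resistance for $k > 2$, one amplifies by invoking $\mathcal{A}'$ independently several times on fresh randomness until $k$ distinct colliding pre-images of the same serial number are collected. The main obstacle is to make the pre-image count rigorous: one must use the multi-collision bound from \cite{zhandry2017quantum} to argue that the measurement of a verifying state distributes nontrivially over a sufficiently large set of pre-images, both to ensure that $x' \neq x$ with non-negligible probability in a single run, and to guarantee that repeated independent runs produce enough genuinely distinct pre-images to contradict the hardness assumption.
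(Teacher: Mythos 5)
Your proposal for property $(I)$ coincides with the paper's (measure all registers in the computational basis), but your reduction for property $(II)$ rests on a mischaracterization of the construction and consequently targets the wrong hardness assumption. In Zhandry's degree-2 construction a bolt is a tensor product of $n$ mini-bolts, and a valid mini-bolt with serial number $y$ has the form $\ket{\Psi}^{\otimes(k+1)}$ with $\ket{\Psi}$ a superposition over pre-images of $y$ under a function $f$ for which finding $k+1$ colliding inputs is \emph{easy}; the only assumption is $(2k+2)$-multi-collision resistance of $f$. Accordingly, the certificate function $H$ is not the degree-2 hash itself: the paper defines $\tilde{H}$ on $(k+1)$-tuples $(z_1,\dots,z_{k+1})$, outputting the common value $f(z_i)$ if all agree (and $\perp$ otherwise), and sets $H=\tilde{H}^{\times n}$. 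A ``$2$-collision'' in your sense --- two distinct colliding inputs for the underlying hash --- is trivially computable and contradicts nothing.

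The amplification step you propose to repair this does not work either: invoking $\mathcal{A}'$ independently on fresh randomness produces colliding pre-images of \emph{different} serial numbers (chosen adversarially in each run, possibly under different sampled functions), and there is no way to stitch these into a single $(2k+2)$-multi-collision of $f$ at one point. The paper's reduction instead extracts the full multi-collision from a \emph{single} run: the adversary's certificate component $x_i$ already contains $k+1$ colliding inputs by the very definition of $H$, the computational-basis measurement of the returned verifying state yields another $(k+1)$-tuple $x_i'$ of pre-images of the same value, and --- using Zhandry's structural lemma that any state passing verification with non-negligible probability has its mini-bolts spread over many pre-images, together with the fact that $k$ is a constant --- the two tuples are entirely disjoint with non-negligible probability, yielding the $(2k+2)$-collision in one shot. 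This single-run disjointness argument is the missing idea; without it your reduction does not reach the stated assumption.
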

\begin{proof}
The proof is similar to the proof of Proposition \ref{prop: extra property 1}. We include it for completeness in the appendix (\ref{sec: appendix}). 
\end{proof}

\subsection{Universal Composability}
\label{sec: uc}
This subsection is intended as a concise primer about the Universal Composability (UC) model of Canetti \cite{canetti2001universally}. We refer the reader to \cite{canetti2001universally} for a rigorous definition and treatment of the UC model and of composable security, and to the tutorial $\cite{canetti2006tutorial}$ for a more gentle introduction. At the end, we provide a brief overview of the Generalized UC model (GUC) \cite{canetti2007global}. While introducing UC and GUC, we also setup some of the notation that we will employ in the rest of the paper. The reader familiar with these concepts may wish to skip ahead to the next section. 

In the universal composability framework (UC), parties are modelled as Interactive Turing Machines (ITM) who can communicate by writing on each other's externally writable tapes, subject to some global constraints.  Informally, a protocol is specified by a code $\pi$ for an ITM, and consists of various rounds of communication and local computation between instances of  ITMs (the parties), each running the code $\pi$ on their machine, on some private input.

Security in the UC model is defined via the notion of emulation. Informally, we say that a protocol $\pi$ emulates a protocol $\phi$ if whatever can be achieved by an adversary attacking $\pi$ can also be achieved by some other adversary attacking $\phi$. This is formalized by introducing simulators and environments.

Given protocols $\pi$ and $\phi$, we say that $\pi$ emulates (or "is as secure as") $\phi$ in the UC model, if for any polynomial-time adversary $\mathcal{A}$ attacking protocol $\pi$, there exists a polynomial-time simulator $\mathcal{S}$ attacking $\phi$ such that no polynomial-time distinguisher $\mathcal{E}$, referred to as the \textit{environment}, can distinguish between $\pi$ running with $\mathcal{A}$ and $\phi$ running with $\mathcal{S}$. Here, the environment $\mathcal{E}$ is allowed to choose the protocol inputs, read the protocol outputs, including outputs from the adversary or the simulator, and to communicate with the adversary or simulator during the execution of the protocol (without of course being told whether the interaction is with the adversary or with the simulator). In this framework, one can formulate security of a multiparty cryptographic task by first defining an \textit{ideal functionality} $\mathcal{F}$ that behaves exactly as intended, and then providing a ``real-world'' protocol that emulates, or ``securely realizes'', the ideal functionality $\mathcal{F}$. 

We give more formal definitions for the above intuition. To formulate precisely what it means for an environment $\mathcal{E}$ to tell two executions apart, one has to formalize the interaction between $\mathcal{E}$ and the protocols in these executions. Concisely, an execution of a protocol $\pi$ with adversary $\mathcal{A}$ and environment $\mathcal{E}$ consists of a sequence of activations of ITMs. At each activation, the active ITM runs according to its code, its state and the content of its tapes, until it reaches a special \textsf{wait} state. The sequence of activations proceeds as follows: The environment $\mathcal{E}$ gets activated first and chooses inputs for $\mathcal{A}$ and for all parties. Once $\mathcal{A}$ or a party is actived by an incoming message or an input, it runs its code until it produces an outgoing message for another party, an output for $\mathcal{E}$, or it reaches the $\textsf{wait}$ state, in which case $\mathcal{E}$ is activated again. The execution terminates when $\mathcal{E}$ produces its output, which can be taken to be a single bit. Note that each time it is activated, $\mathcal{E}$ is also allowed to invoke a new party, and assign a unique PID (party identifier) to it. 
Allowing the environment to invoke new parties will be particularly important in section \ref{sec: security}, where we discuss security. There, the fact that the environment has this ability implies that our security notion captures realistic scenarios in which the set of parties is not fixed at the start, but is allowed to change. Moreover, each invocation of a protocol $\pi$ is assigned a unique session identifier SID, to distinguish it from other invocations of $\pi$. We denote by $\textrm{EXEC}_{\pi,\mathcal{A}, \mathcal{E}}(\lambda, z)$ the output of environment $\mathcal{E}$ initialized with input $z$, and security parameter $\lambda$ in an execution of $\pi$ with adversary $\mathcal{A}$.  

We are ready to state the following (slightly informal) definition.

\begin{definition}
\label{def: emulation}
A protocol $\pi$ UC-emulates a protocol $\phi$ if, for any PPT adversary $\mathcal{A}$, there exists a PPT simulator $\mathcal{S}$ such that, for any PPT environment $\mathcal{E}$, the families of random variables $\{\textrm{EXEC}_{\pi,\mathcal{A}, \mathcal{E}}(\lambda, z)\}_{\lambda \in \mathbb{N}, z \in \{0,1\}^{poly(\lambda)}}$ and $\{\text{EXEC}_{\phi,\mathcal{S}, \mathcal{E}}(\lambda, z)\}_{\lambda \in \mathbb{N}, z \in \{0,1\}^{poly(\lambda)}}$ are indistinguishable.
\end{definition}

Then, given an ideal functionality $\mathcal{F}$ which captures the intended ideal behaviour of a certain cryptographic task, one can define the ITM code $I_{\mathcal{F}}$, which behaves as follows: the ITM running $I_\mathcal{F}$ simply forwards any inputs received to the ideal functionality $\mathcal{F}$. We then say that a ``real-world'' protocol $\pi$ securely realizes $\mathcal{F}$ if $\pi$ emulates $I_{\mathcal{F}}$ according to Definition \ref{def: emulation}.

\paragraph{A composition theorem} The notion of security we just defined is strong. One of the main advantages of such a security definition is that it supports composition, i.e. security remains when secure protocols are executed concurrently, and arbitrary messages can be sent between executions. We use the notation $\sigma^{\pi}$ for a protocol $\sigma$ that makes up to polynomially many calls to another protocol $\pi$. In a typical scenario, $\sigma^{\mathcal{F}}$ is a protocol that makes use of an ideal functionality $\mathcal{F}$, and $\sigma^{\pi}$ is the protocol that results by implementing $\mathcal{F}$ through the protocol $\pi$ (i.e. replacing calls to $\mathcal{F}$ by calls to $\pi$). It is natural to expect that if $\pi$ securely realizes $\mathcal{F}$, then $\sigma^{\pi}$ securely realizes $\sigma^{\mathcal{F}}$. This is the content of the following theorem.

\begin{theorem}[Universal Composition Theorem]
Let $\pi$, $\phi$, $\sigma$ be polynomial-time protocols. Suppose protocol $\pi$ UC-emulates $\phi$. Then $\sigma^{\pi}$ UC-emulates $\sigma^{\phi}$.
\end{theorem}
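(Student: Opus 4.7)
The plan is to reduce the composition statement to the single-session emulation hypothesis via a hybrid argument, following the standard strategy for proving the UC composition theorem. Given a PPT adversary $\mathcal{A}$ attacking $\sigma^\pi$, the goal is to produce a PPT simulator $\mathcal{S}$ attacking $\sigma^\phi$ so that no PPT environment $\mathcal{E}$ distinguishes $\textrm{EXEC}_{\sigma^\pi,\mathcal{A},\mathcal{E}}$ from $\textrm{EXEC}_{\sigma^\phi,\mathcal{S},\mathcal{E}}$. Since $\sigma$ runs in polynomial time, any execution instantiates at most $N = N(\lambda)$ sub-sessions of the lower-level protocol, and I will interpolate between the two worlds one sub-session at a time.

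I first invoke the hypothesis that $\pi$ UC-emulates $\phi$ in its most convenient form: by the standard \emph{dummy adversary lemma}, it suffices to consider the dummy adversary $\mathcal{D}_\pi$ on $\pi$ (which merely forwards messages to and from $\mathcal{E}$), and the hypothesis then yields a PPT simulator $\mathcal{S}_0$ on $\phi$ whose interaction with any environment is indistinguishable from that of $\mathcal{D}_\pi$ on $\pi$. The composed simulator $\mathcal{S}$ is built by internally running a copy of $\mathcal{A}$, faithfully relaying messages between $\mathcal{A}$ and the $\sigma$-level parties, and launching a fresh independent instance of $\mathcal{S}_0$ for each sub-session that $\sigma^\phi$ invokes; unique session identifiers route messages between $\mathcal{A}$ and the appropriate copy of $\mathcal{S}_0$.

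To establish indistinguishability, define a sequence of hybrids $H_0, H_1, \ldots, H_N$ in which, in $H_i$, the first $i$ sub-sessions called during the execution are implemented by $\phi$ with a fresh copy of $\mathcal{S}_0$, while the remaining $N-i$ are implemented by $\pi$ with $\mathcal{A}$ acting on them through its dummy interface. By construction, $H_0$ coincides with $\textrm{EXEC}_{\sigma^\pi,\mathcal{A},\mathcal{E}}$ and $H_N$ coincides with $\textrm{EXEC}_{\sigma^\phi,\mathcal{S},\mathcal{E}}$. To bound $|\Pr[H_{i-1}=1] - \Pr[H_i=1]|$ I construct an environment $\mathcal{E}_i$ attacking a single session of $\pi$ or $\phi$: $\mathcal{E}_i$ internally emulates $\mathcal{E}$, the $\sigma$-level parties, and the other $N-1$ sub-sessions, and hands its external challenger the role of the $i$-th sub-session. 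A noticeable distinguishing advantage for $\mathcal{E}_i$ would contradict the emulation of $\pi$ by $\phi$, so each adjacent pair of hybrids is indistinguishable.

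The main obstacle is ensuring that the argument scales cleanly. The overall distinguishing advantage for $\mathcal{E}$ is at most $N$ times the single-session gap; this is negligible only because $N$ is a fixed polynomial and the per-step gap is negligible, so the sum is absorbed in $\textrm{negl}(\lambda)$. Rigor requires that each reduction $\mathcal{E}_i$ be PPT, which rests on $\sigma$, $\mathcal{A}$, and $\mathcal{S}_0$ all being polynomial-time and on the ability of $\mathcal{E}_i$ to simulate the other $N-1$ sub-sessions and the $\sigma$-level computation within a polynomial budget. A second subtle point is that concurrent sub-sessions must share no state beyond what session identifiers allow — a property inherent to the UC execution model, which is what makes the per-session replacement in the hybrid step well-defined in the first place. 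Once these bookkeeping issues are handled, summing the $N$ negligible gaps yields a negligible total distinguishing advantage for $\mathcal{E}$, completing the proof.
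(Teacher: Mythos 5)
The paper does not prove this theorem: it is imported verbatim from Canetti's UC framework \cite{canetti2001universally} (and its quantum extension \cite{unruh2010universally}) and stated in the preliminaries purely as background, so there is no in-paper proof to compare against. Your sketch is the canonical argument for the composition theorem --- reduce to the dummy adversary, build the composed simulator from independent per-session copies of the single-session simulator $\mathcal{S}_0$, and interpolate with a hybrid over the polynomially many sub-sessions --- and it is correct in outline. One technical point deserves more care than you give it: you bound each adjacent hybrid gap by the advantage of a \emph{different} environment $\mathcal{E}_i$, and a sum of $N(\lambda)$ negligible functions that are not uniformly bounded need not be negligible. The standard repair is to define a single reduction environment that samples the index $i$ uniformly at random and embeds the external challenge session there; its advantage is the average of the $N$ gaps, so a non-negligible total forces a non-negligible advantage for this one PPT environment, contradicting the hypothesis. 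With that adjustment (and the observation, relevant to this paper's setting, that the same argument goes through when the environment and adversary are quantum polynomial-time, as Unruh shows), your proof is sound.
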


Replacing $\phi$ by $I_{\mathcal{F}}$ for some ideal functionality $\mathcal{F}$ in the above theorem yields the composable security notion discussed above.


\paragraph{Generalized UC model}
The formalism of the original UC model is not able to handle security requirements in the presence of a ``global trusted setup''. By this, we mean some global information accessible to all parties, which is guaranteed to have certain properties.  Examples of this are a public-key infrastructure or a common reference string. Emulation in the original UC sense is not enough to guarantee composability properties in the presence of a global setup. Indeed, one can construct examples in which a UC-secure protocol for some functionality interacts badly with another UC-secure protocol and affects its security, if both protocols make reference to the same global setup. For more details and concrete examples see \cite{canetti2007global}.

The generalized UC framework (GUC) of Canetti et al. \cite{canetti2007global} allows for a ``global setup''. The latter is modelled as an ideal functionality which is allowed to interact not only with the parties running the protocol, but also with the environment. GUC formulates a stronger security notion, which is sufficient to guarantee a composition theorem, i.e. ideal functionalities with access to a shared global functionality $\mathcal{G}$ can be replaced by protocols that securely realize them in the presence of $\mathcal{G}$. Further, one can also replace global ideal functionalities with appropriate protocols realizing them. This kind of replacement does not immediately follow from the previous composition theorem and requires a more careful analysis, as is done in \cite{canetti2016pki}, where sufficient conditions for this replacement are established.

\paragraph{Universal Composability in the quantum setting}
In our setting, we are interested in honest parties, adversaries and environments that are quantum polynomial-time ITMs. The notion of Universal Composability has been studied in the quantum setting in \cite{benor2004general}, \cite{unruh2004simulatable} and \cite{unruh2010universally}. In particular, in \cite{unruh2010universally}, Unruh extends the model of computation of UC and its composition theorems to the setting in which polynomial-time classical ITMs are replaced by polynomial-time quantum ITMs (and ideal functionalities are still classical). The proofs are essentially the same as in the classical setting. Although the quantum version of the Generalized UC framework has not been explicitly studied in \cite{unruh2010universally}, one can check that the proofs of the composition theorems for GUC from \cite{canetti2007global} and \cite{canetti2016pki} also go through virtually unchanged in the quantum setting.

\section{Blockchains and smart contracts}
\label{sec: blockchains}
In this section, we start by describing blockchains and smart contracts informally. We follow this by a more formal description. As mentioned in the introduction, the essential features of blockchains and smart contracts can be abstracted by modeling them as ideal functionalities in the Universal Composability framework of Canetti \cite{canetti2001universally}. In this section, we introduce a global ideal functionality that abstracts the properties of a transaction ledger capable of handling stateful smart contracts. We call this $\mathcal{F}_{Ledg}$, which we describe in Fig. \ref{fig: ledger functionality}.
\vspace{1.5mm}

Informally, a blockchain is a public ledger consisting of a sequence of blocks. Each block typically contains information about a set of transactions, and a new block is appended regularly via a consensus mechanism that involves the nodes of a network. A blockchain is equipped with a native currency which is employed in transactions, and whose basic unit is referred to as a \textit{coin}.

Each user in the network is associated with a public key (this can be thought of as the user's address). A typical transaction is a message which transfers coins from a public key to another. It is considered valid if it is digitally signed using the secret key corresponding to the sending address. 

More precisely, in Bitcoin, parties do not keep track of users's accounts, but rather they just maintain a local copy of a set known as ``unspent transaction outputs set'' (UTXO set). An unspent output is a transaction that has not yet been ``claimed'', i.e. the coins of these transactions have not yet been spent by the receiver. Each unspent output in the UTXO set includes a circuit (also known as a ``script'') such that any user that can provide an input which is accepted by the circuit (i.e. a witness) can make a transaction that spends these coins, thus creating a new unspent output. Hence, if only one user knows the witness to the circuit, he is effectively the owner of these coins. For a standard payment transaction, the witness is a signature and the circuit verifies the signature. However, more complex circuits are also allowed, and these give rise to more complex transactions than simple payments: smart contracts. A smart contract can be thought of as a transaction which deposits coins to an address. The coins are released upon fulfillment of certain pre-established conditions. 

In \cite{bentov2014use} and \cite{bentov2017instantaneous}, smart contracts are defined as ideal functionalities in a variant of the Universal Composability (UC) model \cite{canetti2001universally}. The ideal functionality that abstracts the simplest smart contracts was formalized in \cite{bentov2014use}, and called ``Claim or Refund''. Informally, this functionality specifies that a sender $P$ locks his coins and chooses a circuit $\phi$, such that a receiver $Q$ can gain possession of these coins by providing a witness $w$ such that $\phi(w) = 1$ before an established time, and otherwise the sender can reclaim his coins. The ``Claim or Refund'' ideal functionality can be realized in Bitcoin as long as the circuit $\phi$ can be described in Bitcoin's scripting language. On the other hand, Ethereum's scripting language is Turing-complete, and so any circuit $\phi$ can be described.

``Claim or refund'' ideal functionalities can be further generalized to ``stateful contracts''. In Ethereum, each unspent output also maintains a \textit{state}. In other words, each unspent output comprises not only a circuit $\phi$, but also state variables. Parties can claim partial amounts of coins by providing witnesses that satisfy the circuit $\phi$ in accordance with the current state variables. In addition, $\phi$ also specifies an update rule for the state variables, which are updated accordingly. We refer to these type of transactions as stateful contracts, as opposed to the ``stateless'' contract of ``Claim or Refund''. 
Stateful contracts can be realized in Ethereum, but not in Bitcoin. From now onwards, we will only work with stateful contracts. We will use the terms ``smart contracts'' and ``stateful contracts'' interchangeably. 

We emphasize that our modeling is inspired by \cite{bentov2014use} and \cite{bentov2017instantaneous}, but differs in the way that coins are formalized. One difference from the model of Bentov et al. is that there, in order to handle coins, the authors augment the original UC model by endowing each party with a \textit{wallet} and a \textit{safe}, and by considering \textit{coins} as atomic entities which can be exchanged between parties. To the best of our knowledge, this variant of the UC framework is not subsumed by any of the previously studied variants, and thus it is not known whether a composition theorem holds for it.

On the other hand, we feel that a more natural approach is to work in the Generalized UC model \cite{canetti2007global}, and to define a global ideal functionality $\mathcal{F}_{Ledg}$ which abstracts the essential features of a \textit{transaction ledger} capable of handling stateful smart contracts. This approach was first proposed by Kiayias et al. \cite{kiayias2016fair}. The appeal of modeling a transaction ledger as a global ideal functionality is that composition theorems are known in the Generalized UC framework. In virtue of this, any secure protocol for some task that makes calls to $\mathcal{F}_{Ledg}$ can be arbitrarily composed while still maintaining security. This means that one need not worry about composing different concurrent protocols which reference the same transaction ledger. One would hope that it is also the case that a secure protocol that makes calls to $\mathcal{F}_{Ledg}$ remains secure when the latter are replaced by calls to secure real-world realizations of it (on Ethereum for example). This requires a more careful analysis, and Canetti et al. provide in \cite{canetti2016pki} sufficient conditions for this replacement to be possible. We do not prove that a secure real-world realization of $\mathcal{F}_{Ledg}$ on an existing blockchain exists, but we believe that $\mathcal{F}_{Ledg}$, or a close variant of it, should be securely realizable on the Ethereum blockchain. In any case, we work abstractly by designing our payment system, and proving it secure, assuming access to such an ideal functionality. The appeal of such an approach is that the security of the higher-level protocols is independent of the details of the particular real-world implementation of $\mathcal{F}_{Ledg}$. 

Next, we describe our global ideal functionality $\mathcal{F}_{Ledg}$. In doing so, we establish the notation that we will utilize in the rest of the paper. 

\vspace{1.5mm}

\paragraph{Global ledger ideal functionality} 


We present in Fig. \ref{fig: ledger functionality} our global ledger ideal functionality $\mathcal{F}_{Ledg}$. In a nutshell, this keeps track of every registered party's coins, and allows any party to transfer coins in their name to any other party. It also allows any party to retrieve information about the number of coins of any other party, as well as about any previous transaction. The initial amount of coins of a newly registered party is determined by its PID (recall that in a UC-execution the PID of each invoked party is specified by the environment; see subsection \ref{sec: uc} for more details). Moreover, $\mathcal{F}_{Ledg}$ handles (stateful) smart contracts: it accepts deposits from the parties involved in a contract and then pays rewards appropriately. Recall that in stateful smart contracts a party or a set of parties deposit an amount of coins to the contract. The contract is specified by a circuit $\phi$, together with an initial value for a state variable $\textsf{st}$. A state transition is triggered by any party $P$ with PID $pid$ sending a witness $w$ which is accepted by $\phi$ in accordance with the current state and the current time $t$. More precisely, the contract runs $\phi(pid,w,t,\textsf{st})$, which outputs either ``$\perp$'' or a new state (stored in the variable \textsf{st}) and a number of coins $d \in \mathbb{N}$ that is released to $P$. Each contract then repeatedly accepts state transitions until it has distributed all the coins that were deposited into it at the start. Notice that $\phi$ can accept different witnesses at different times (the acceptance of a witness can depend on the current time $t$ and the current value of the state variable $\textsf{st}$). Information about existing smart contracts can also be retrieved by any party.

The stateful-contract portion of $\mathcal{F}_{Ledg}$ resembles closely the functionality $\mathcal{F}_{StCon}$ from \cite{bentov2017instantaneous}. Our approach differs from that of $\cite{bentov2017instantaneous}$ in that the we make the smart contract functionality part of the global ideal functionality $\mathcal{F}_{Ledg}$ which also keeps track of party's coins and transactions. In \cite{bentov2017instantaneous} instead, coins are incorporated in the computation model by augmenting the ITMs with wallets and safes (this changes the model of computation in a way that is not captured by any of the the previously studied variants of the UC framework).

We implicitly assume access to an ideal functionality for message authentication $\mathcal{F}_{auth}$ which all parties employ when sending their messages, and also to a global ideal functionality for a clock that keeps track of time. We assume implicitly that $\mathcal{F}_{Ledg}$ makes calls to the clock and keeps track of time. Alternatively, we could just have $F_{Ledg}$ maintain a local variable that counts the number of transactions performed, and a local time variable $t$, which is increased by $1$ every time the number of transactions reaches a certain number, after which the transaction counter is reset (this mimics the process of addition of blocks in a blockchain, and time simply counts the number of blocks). From now onwards, we do not formally reference calls to $\mathcal{F}_{auth}$ or to the clock to avoid overloading notation. We are now ready to define $\mathcal{F}_{Ledg}$.

\begin{figure}[H]
\rule[1ex]{16.5cm}{0.5pt}
{\centering \textbf{Global ledger ideal functionality} \par}

Initialize the sets $\textsf{parties} = \{\}$, $\textsf{contracts} = \{\}$, $\textsf{AllTransactions} = \{\}$. (Throughout, the variable $t$ denotes the current time.)

\paragraph{Register:} Upon receiving a message \textsf{Register} from a party with PID $pid = (id, d)$, send (\textsf{Registered}, $pid$) to the adversary; upon receiving a message \textsf{ok} from the adversary, and if this is the first request from $pid$, add $pid$ to the set $\textsf{parties}$. Set $pid.\textsf{id} \leftarrow id$ and $pid.\textsf{coins} \leftarrow d$.

\paragraph{Retrieve party:} Upon receiving a message (\textsf{RetrieveParty}, $pid$) from some party $P$ (or the adversary), output (\textsf{RetrieveParty}, $pid$, $d$) to $P$ (or to the adversary), where $d = \perp$ if $pid \notin \textsf{parties}$, and $d = pid.\textsf{coins}$ otherwise. (We slightly abuse notation here in that, when taken as part of a message, $pid$ is treated as a string, but, when called by the functionality, $pid$ is a variable with attributes $pid.\textsf{id}$ and $pid.\textsf{coins}$).

\paragraph{Pay:} Upon receiving a message ($\textsf{pay}$, $pid'$, $d$) from some party $P$ with PID $pid$, and $pid \in \textsf{parties}$, do the following:
\begin{itemize}
    \item If $pid' \in \textsf{parties}$ and $pid.\textsf{coins} >d$, update $pid'.\textsf{coins} \leftarrow pid'.\textsf{coins} + d$ and $pid.\textsf{coins}  \leftarrow pid.\textsf{coins} - d$. Set $trId = |\textsf{AllTransactions}| + 1$. Add a variable named $trId$ to $\textsf{AllTransactions}$, with attribute $trId.\textsf{transaction} = (pid, pid', d, t)$. Send a message (\textsf{Executed}, $trId$) to $P$.
    \item Else, return $\perp$ to $P$.
\end{itemize}

\paragraph{Retrieve transaction:} Upon receiving a message (\textsf{RetrieveTransaction}, $trId$) from some party $P$ (or the adversary), output (\textsf{RetrieveTransaction}, $trId$, $s$), where $s = \perp$ if $trId \notin \textsf{allTransactions}$, and $s = trId.\textsf{transaction}$ otherwise.

\paragraph{Smart contract} Upon receiving a message (\textsf{InitiateSmartContract}, \textsf{Params}=$(I, D, \phi, \textsf{st}_0)$), where $I$ is a set of PID's, $D$ is a set $\{(pid, d_{pid}): pid \in I\}$ of ``initial deposits'', with $d_{pid}$ being the amount required initially from the party with PID $pid$, $\phi$ is a circuit, and $\textsf{st}_0$ is the initial value of a state variable $\textsf{st}$, check that $I \subseteq \textsf{parties}$. If not, ignore the message; if yes, set $ssid = |\textsf{contracts}| + 1$. Add a variable named $ssid$ to $\textsf{contracts}$ with attributes $ss id.\textsf{Params} = (I, D, \phi, \textsf{st}_0)$, $ssid.\textsf{state} = \textsf{st}$ and $ssid.\textsf{coins} \leftarrow 0$. Send a message (\textsf{RecordedContract}, $ssid$) to $P$. Then, do the following:
\begin{itemize}
    \item \textbf{Initialization phase:} Wait to get message $(\textsf{InitializeWithCoins}, \textit{ssid}, \textsf{Params}=(I, D, \phi, \textsf{st}_0))$ from party with PID $pid$ for all $pid \in I$. When all messages are received, and if, for all $pid \in I$, $pid.\textsf{coins} \geq d_{pid}$, then, for all $pid \in I$, update: $pid.\textsf{coins} \leftarrow pid.\textsf{coins} - d_{pid}$ and $ssid.\textsf{coins} \leftarrow ssid.\textsf{coins} + d_{pid}$. Set $\textsf{st} \leftarrow \textsf{st}_0$  (We assume that $ssid.\textsf{state}$ changes dynamically with $\textsf{st}$).
    \item \textbf{Execution phase:} Repeat until termination: Upon receiving a message of the form $(\textsf{Trigger}, \textit{ssid}, w, d)$ at time $t$ from some party with PID $pid \in \textsf{parties}$ (where it can also be $d=0$) such that $\phi(pid, w, t, \textsf{st}, d) \neq \perp$, do the following:
    \begin{itemize}
        \item If $d>0$, update $pid.\textsf{coins} \leftarrow pid.\textsf{coins} - d$ and $ssid.\textsf{coins} \leftarrow ssid.\textsf{coins} + d$
        \item Update $(\textsf{st}, e) \leftarrow \phi(pid, w, t, \textsf{st}, d)$. 
        \item If $e = \textnormal{``all coins''}$, let $q := ssid.\textsf{coins}$. Send the message (\textsf{Reward}, $ssid$, $q$) to the party with PID $pid$ and update $pid.\textsf{coins} \leftarrow pid.\textsf{coins} + q$ and $ssid.\textsf{coins} \leftarrow 0$. If $e > 0$ and $ssid.\textsf{coins} \geq e$, send the message (\textsf{Reward}, $ssid$, $e$) to the party with PID $pid$, and update $pid.\textsf{coins} \leftarrow pid.\textsf{coins} + e$ and $ssid.\textsf{coins} \leftarrow ssid.\textsf{coins} -e$. Else, if $ssid.\textsf{coins} = e' < e$, send the message (\textsf{Reward}, $ssid$, $e'$) to the party with PID $pid$, and update $pid.\textsf{coins} \leftarrow pid.\textsf{coins} + e'$ and $ssid.\textsf{coins} \leftarrow 0$. Then, terminate.
    \end{itemize}
\end{itemize}

\paragraph{Retrieve smart contract:} Upon receiving a message (\textsf{RetrieveContract}, $ssid$) from some party $P$ (or the adversary), output (\textsf{RetrieveContract}, $ssid$, $z$), where $z = \perp$ if $ssid \notin \textsf{contracts}$, and $z = (ssid.\textsf{Params}, ssid.\textsf{state}, ssid.\textsf{coins})$ otherwise.

\rule[2ex]{16.5cm}{0.5pt}\vspace{-.5cm}
\caption{Global ledger ideal functionality $\mathcal{F}_{Ledg}$}
  \label{fig: ledger functionality}
 
\end{figure}

From now on, we will often refer to the number of coins $ssid.\textsf{coins}$ of a contract with session identifier $ssid$ as the coins \textit{deposited} in the contract. When we say that a contract $\textit{releases}$ some coins to a party $P$ with PID $pid$, we mean more precisely that $\mathcal{F}_{Ledg}$ updates its local variables and moves coins from $ssid.\textsf{coins}$ to $pid.\textsf{coins}$.

\section{A payment system based on quantum lightning and a classical blockchain}
\label{sec: main}
In this section, we describe our payment system. We give first an informal description, and in section \ref{sec: 3-2} we give a formal description. 

The building block of our payment system is a quantum lightning scheme, reviewed in detail in section \ref{sec: lightning}. Recall that a quantum lightning scheme consists of a generation procedure which creates quantum banknotes, and a verification procedure that verifies them and assigns serial numbers. The security guarantee is that no generation procedure (not even the honest one) can create two banknotes with the same serial number except with negligible probability. As mentioned earlier, this property is desirable if one wants to design a decentralized payment system, as it prevents anyone from cloning banknotes (even the person who generates them). However, this calls for a mechanism to regulate generation of new valid quantum banknotes.

In this section, we describe formally a proposal that employs smart contracts to provide such a mechanism. As we have described informally in the introduction, the high-level idea is to keep track of the valid serial numbers using smart contracts. Any party is allowed to deposit any amount of coins $d$ (of their choice) into a smart contract with specific parameters (see definition \ref{def: banknote-contract} below), and with an initial value of his choice for a $\textit{serial number}$ state variable. We can think of the quantum banknote with the chosen serial number as having ``acquired'' value $d$. A payment involves only two parties: a payer, who sends a quantum banknote, and a payee who receives it and verifies it locally. As anticipated in the introduction, the full payment system includes the following additional features, which we describe here informally in a little more detail (all of these are described formally in subsection \ref{sec: 3-2}):

\begin{itemize}
\item Removing a serial number from the list of valid serial numbers in order to recover the amount of coins deposited in the corresponding smart contract. This makes the two commodities (quantum banknotes and coins on the blockchain) interchangeable. This is achieved by exploiting the additional property of of some quantum lightning scheme from Definition \ref{def: extra property}. Recall that, informally, this property states that there is some classical certificate that can be recovered by measuring a valid quantum banknote, which no efficient algorithm can recover otherwise. The key is that once the state is measured to recover this certificate, it is damaged in a way that it only passes verification with negligible probability (meaning that it can no longer be spent). We allow users to submit this classical certificate to a smart contract, and if the certificate is consistent with the serial number stored in the contract, then the latter releases all of the coins deposited in the contract to the user. 
\item Allowing a party to replace an existing serial number with a new one of their choice in case they lose a valid quantum state (they are fragile after all!). We allow a user $P$ to file a ``lost banknote claim'' by sending a message and some fixed amount of coins $d_0$ to a smart contract whose serial number is the serial number of the lost banknote. The idea is that if no one challenges this claim, then after a specified time $t_{tr}$ user $P$ can submit a message which changes the value of the serial number state variable to a value of his choice and recovers the previously deposited $d_0$ coins. On the other hand, if a user $P$ maliciously files a claim to some contract with serial number $s$, then any user $Q$ who possesses the valid banknote with serial number $s$ can recover the classical certificate from Definition \ref{def: extra property}, and submit it to the contract. This releases all the coins deposited in the contract to $Q$ (including the $d_0$ deposited by $P$ to make the claim). As you might notice, this requires honest users to monitor existing contracts for ``lost banknote claims''. This, however, is not much of a burden if $t_{tr}$ is made large enough (say a week or a month). The requirement of being online once a week or once a month is easy to meet in practice.
\end{itemize}

\subsection{The payment system and its components}
\label{sec: 3-2}

In this section, we describe in detail all of the components of the payment system. It consists of the following: a protocol to generate valid quantum banknotes; a protocol to make a payment; a protocol to file a claim for a lost banknote; a protocol to prevent malicious attempts at filing claims for lost banknotes; and a protocol to trade a valid quantum banknote in exchange for coins.

Recall from the description of $\mathcal{F}_{Ledg}$ that each smart contract is specified by several parameters: $I$ is a set of PIDs of parties who are expected to make the initial deposit, with $\{d_{pid} : pid \in I\}$ being the required initial deposit amounts; a circuit $\phi$ specifies how the state variables are updated and when coins are released; an initial value $\textsf{st}_0$ for the state variable $\textsf{st}$; a session identifier $ssid$.

Let $\lambda \in \mathbb{N}$. From now onwards, assume that $(\textnormal{Gen}, \textnormal{Ver}, H) \leftarrow \textsf{QL.Setup}(\lambda)$, where the latter is the setup procedure of a quantum lightning scheme with bolt-to-certificate capability (i.e. a quantum lightning scheme that satisfies the additional property of Definition \ref{def: extra property}). Let $\mathcal{A}_*$ be as in Definition \ref{def: extra property}, namely $\textsf{Game}_1(\mathcal{A}_*, \lambda) = 1$ with all but negligible probability. In Definition \ref{def: banknote-contract}, we define an instantiation of smart contracts with a particular choice of parameters, which we refer to as banknote-contracts. Banknote-contracts are the building blocks of the protocols that make up our payment system. We describe a banknote-contract informally before giving a formal definition. 

A banknote-contract is a smart contract initialized by a single party, and it has a state variable of the form $\textsf{st} = (\textsf{serial}, \textsf{ActiveLostClaim})$. The party initializes the banknote-contract by depositing a number of coins $d$ and by setting the initial value of $\textsf{serial}$ to any desired value. The banknote-contract handles the following type of requests:
\begin{itemize}
    \item As long as $\textsf{ActiveLostClaim} = \text{``No active claim''}$ (which signifies that there are no currently active lost-banknote claims), any party $P$ can send the message \textsf{BanknoteLost}, together with a pre-established amount of coins $d_0$ to the contract. This will trigger an update of the state variable $\textsf{ActiveLostClaim}$ to reflect the active lost-banknote claim by party $P$. 
    \item As long as there is an active lost-banknote claim, i.e. $\textsf{ActiveLostClaim} = \text{``Claim by $pid$ at time $t$''}$, any party $Q$ can challenge that claim by submitting a message $(\textsf{ChallengeClaim}, x, s')$ to the contract, where $s'$ is a proposed new serial number. We say that $x$ is a valid classical certificate for the current value $s$ of $\textsf{serial}$ if $H(x) =s$. Such an $x$ can be thought of as a proof that whoever is challenging the claim actually possessed a quantum banknote with serial number $s$, and destroyed it in order to obtain the certificate $x$, and thus that the current active lost-banknote claim is malicious. If $x$ is a valid classical certificate for $s$, then $\textsf{serial}$ is updated to the new value $s'$ submitted by $Q$, who also receives all of the coins deposited in the contract (including the $d_0$ coins deposited by the malicious claim).
    \item If party $P$ has previously submitted a lost-banknote claim, and his claim stays unchallenged for time $t_{tr}$, then party $P$ can send a message $(\textsf{ClaimUnchallenged}, s')$ to the contract, where $s'$ is a proposed new serial number. Then the contract returns to $P$ the $d_0$ coins he initially deposited when making the claim, and updates $\textsf{serial}$ to $s'$.
    \item Any party $P$ can submit to the contract a message $(\textsf{RecoverCoins}, x)$. If $x$ is a valid classical certificate for the current value of $s$ of $\textsf{serial}$, then the contract releases to $P$ all the coins currently deposited in the contract. This allows party $P$ to ``convert'' back his quantum banknote into coins.
\end{itemize}

Next, we will formally define banknote-contracts, and then formally describe all of the protocols that make the payment system.

\begin{figure}[H]
\rule[1ex]{16.5cm}{0.5pt}\\
$\phi_{\$}\left(pid, w, t, (\textsf{serial}, \textsf{ActiveLostClaim}), d\right)$ takes as input strings $pid$ and $w$, where $pid$ is meant to be the PID of some party $P$, and we refer to $w$ as the ``witness'', $t \in \mathbb{N}$ denotes the ``current time'' mantained by $\mathcal{F}_{Ledg}$, $(\textsf{serial}, \textsf{ActiveLostClaim})$ is the current value of the state variable, and $d \in \mathbb{N}$ is the number of coins that are being deposited to the smart contract with the current message. $\phi_{\$}$ has hardcoded parameters: $d_0 \in \mathbb{N}$ the amount of coins needed to file a claim for a lost money state, $t_{tr} \in \mathbb{N}$ the time after which an unchallenged claim can be settled ($d_0$ and $t_{tr}$ are fixed constants agreed upon by all parties, and they are the same for all banknote-contracts). The circuit $\phi_{\$}$ outputs new values for the state variables and an amount of coins as follows:

On input $(pid, w, t, (\textsf{serial} = s, \textsf{ActiveLostClaim}), d)$, $\phi_{\$}$ does the following:
\begin{itemize}
\item If $\textsf{ActiveLostClaim} = \text{``No active claim''}$:
\begin{itemize}
\item If $w = \textsf{BanknoteLost}$ and $d = d_0$, then $\phi$ outputs  $\big((\textsf{serial} = s, \textsf{ActiveLostClaim} = \text{``Claim by $pid$ at time $t$''}), 0\big)$ (to symbolize that at time $t$ party with PID $pid$ has claimed to have lost the money state with serial number $s$, and that zero coins are being released).
\item If $w = (\textsf{RecoverCoins}, x)$, where $x \in \{0,1\}^{\lambda}$ and $H(x) = s$, then $\phi$ outputs $\big( (\textsf{serial} = \perp, \textsf{ActiveLostClaim} = \perp), \textnormal{``all coins''}\big)$
\end{itemize}
\item If $\textsf{ActiveLostClaim} = \text{``Claim by $pid'$ at time $t_0$''}$ for some $pid', t_0$: 
\begin{itemize}
\item If $w = (\textsf{ChallengeClaim}, x, s')$,  where $x \in \{0,1\}^{\lambda}$ and $H(x) = s$, and $s' \in \{0,1\}^{\lambda}$ then $\phi$ outputs  $\big((\textsf{serial} = s', \textsf{ActiveLostClaim} = \text{``No active claim''}), d_0 \big)$.
\item If $w = (\textsf{ClaimUnchallenged}, s')$, $pid=pid'$ and $t-t_0 > t_{tr}$, then $\phi$ outputs  $\big((\textsf{serial} = s', \textsf{ActiveLostClaim} = \text{``No active claim''}), d_0 \big)$.
\end{itemize}
\end{itemize}

\rule[2ex]{16.5cm}{0.5pt}\vspace{-.5cm}
\caption{Circuit $\phi_{\$}$ for banknote-contracts}
  \label{fig: circuit phi}
  
\end{figure}

\begin{definition}(Banknote-contract)
\label{def: banknote-contract}
A banknote-contract, is a smart contract on $\mathcal{F}_{Ledg}$ specified by parameters of the following form: $I = \{pid\}$ for some $pid \in [n]$, $D = \{(pid, d_{pid})\}$ for some $d_{pid} \in \mathbb{N}$, $\textsf{st}_0 = (s, \text{``No active claim''})$ for some $s \in \{0,1\}^{\lambda}$, and circuit $\phi = \phi_{\$}$, where $\phi_{\$}$ is defined as in Fig. \ref{fig: circuit phi}.
\end{definition}

For convenience, we denote by $\textsf{serial}$ and $\textsf{ActiveLostClaim}$ respectively the first and second entry of the state variable of a banknote-contract.


\paragraph{Generating valid quantum banknotes} We describe the formal procedure for generating a valid quantum banknote.



\begin{figure}[H]
\rule[1ex]{16.5cm}{0.5pt}\\
Protocol carried out by some party $P$ with PID $pid$.\\

Input of $P$: An integer $d$ such that $pid.\textsf{coins} > d$ in $\mathcal{F}_{Ledg}$ ($d$ is the ``value'' of the prospective banknote).
\begin{itemize}
\item Run $\ket{\psi} \leftarrow \textnormal{Gen}$. Let $s := \textnormal{Ver}(\ket{\psi})$. 
\item Send $\big(\textsf{InitiateSmartContract},  \textsf{Params} \big)$ to $\mathcal{F}_{Ledg}$, where $\textsf{Params} = (\{pid\}, \{(pid, d)\}, \phi, (s, \text{``No active claim''}))$. Upon receipt of a message of the form (\textsf{RecordedContract}, $ssid$), send the message (\textsf{InitializeWithCoins}, $ssid$, \textsf{Params}) to $\mathcal{F}_{Ledg}$.
\end{itemize}

\rule[2ex]{16.5cm}{0.5pt}\vspace{-.5cm}
\caption{Generating a valid banknote}
  \label{fig: protocol gen banknote}
  
\end{figure}

\paragraph{Making a payment} We describe formally the protocol for making a payment in Fig. \ref{fig: protocol making a payment}. Informally, the protocol is between a party $P$, the payer, and a party $Q$, the payee. In order to pay party $Q$ with a bolt whose serial number is $s$, party $P$ sends the valid bolt to party $Q$, the payee, together with the $ssid$ of a smart contract with $\textsf{serial} =s$. Party $Q$ verifies that $ssid$ corresponds to a banknote-contract with $\textsf{serial} = s$, and verifies that the banknote passes verification and has serial number $s$. 

\begin{figure}[H]
\rule[1ex]{16.5cm}{0.5pt}
The protocol is between some party $P$ with PID $pid$(the payer) and a party $Q$ with PID $pid'$ (the payee):\\

Input of $P$: $\ket{\Psi}$, a valid bolt with serial number $s$. $ssid$ the session identifier of a smart contract on $\mathcal{F}_{Ledg}$ such that $ssid.\textsf{state} = (s,\text{``No active claim''})$, and  $ssid.\textsf{coins} = d$. 
\begin{itemize}
\item $P$ sends state $\ket{\Psi}$ to $Q$. $P$ also sends a message ($ssid$, $s$, $d$) to $Q$.
\item $Q$ sends a message (\textsf{RetrieveContract}, $ssid$) to $\mathcal{F}_{Ledg}$. Upon receiving a message (\textsf{RetrieveContract}, $ssid$, $z$) from $\mathcal{F}_{Ledg}$ (where $z = (ssid.\textsf{Params}, ssid.\textsf{state}, ssid.\textsf{coins})$ if $P$ is honest), $Q$ does the following: 
\begin{itemize}
\item If $z = (\textsf{Params}, (s, \text{``No active claim''}), d)$, then $Q$ checks that the parameters $\textsf{Params}$ are of the form of a banknote-contract (from Definition \ref{def: banknote-contract}). If so, runs $\textnormal{Ver}(\ket{\Psi})$ and checks that the outcome is $s$. If so, sends a message $\textsf{accept}$ to $P$.
\item Else, $Q$ aborts.
\end{itemize} 
\end{itemize}

\rule[2ex]{16.5cm}{0.5pt}\vspace{-.5cm}
\caption{Protocol for making and verifying a payment}
  \label{fig: protocol making a payment}
  
\end{figure}

\paragraph{Recovering lost banknotes} 
\label{sec: recovering}

As much as we can hope for experimental progress in the development of quantum memories, for the foreseeable future we can expect quantum memories to only be able to store states for a time on the order of days. It is thus important that any payment system involving quantum money is equipped with a procedure for users to recover the value associated to quantum states that get damaged and become unusable. Either users should be able to ``convert'' quantum money states back to coins on the blockchain, or they should be able, upon losing a quantum banknote, to change the serial number state variable of the associated smart contract to a new serial number (presumably of freshly generated quantum banknote). Here, we describe a protocol for the latter. Later, we describe a protocol for the former.

Informally, a party $P$ who has lost a quantum banknote with serial number $s$ associated to a smart contract with session identifier $ssid$, makes a ``lost banknote claim'' at time $t$ by depositing a number of coins $d_0$ to that banknote-contract. Recall the definition of banknote-contracts from Definition \ref{def: banknote-contract}, and in particular of the circuit $\phi_{\$}$:
\begin{itemize}
\item If party $P$ is honest, then after a time $t_{tr}$ has elapsed, he will be able to update the state variable $\textsf{serial}$ of the banknote-contract from $s$ to $s'$ (where $s'$ is presumably the serial number of a new valid bolt that party $P$ has just generated).
\item If party $P$ is dishonest, and he is claiming to have lost a banknote that someone else possesses, then the legitimate owner can apply $\mathcal{A}_*$ as in Definition \ref{def: extra property} to the legitimate bolt $\ket{\Psi}$ and recover $x$ such that $H(x) = s$. He can then send $x$ to the contract and a new serial number $s'$ (presumably of a freshly generate bolt) and obtain $d_0$ coins from the contract (the $d_0$ coins deposited by $P$ in his malicious claim).
\end{itemize}
We describe the protocol formally in Fig. \ref{fig: protocol making a claim}.

One might wonder whether, in practice, an adversary can instruct a corrupt party to make a ``lost banknote claim'', and then intercept an honest party's classical certificate $x$ before this is posted to the blockchain, and have a second corrupt party post it instead. This attack would allow the adversary to ``steal'' the honest party's value. In our analysis, we do not worry about this, as we assume access to the ideal functionality $\mathcal{F}_{Ledg}$, which, by definition, deals with incoming messages in the order that they are received. We also assume in our adversarial model, specified more precisely in Section \ref{sec: security}, that the adversary does not have any control over the delivery of messages (and their timing). If one assumes a more powerful adversary (with some control over the timing of delivery of messages), then this kind of issue can still be mitigated, for example, in the following manner: the honest party does not directly post the classical certificate $x$, but she instead first posts a commitment to $x$, and she reveals it only at a latter stage. We thank Or Sattath for pointing out this issue.

\begin{figure}[H]
\rule[1ex]{16.5cm}{0.5pt}
Protocol carried out by party $P$ with PID $pid$ for changing the serial number of a smart contract.\\

$P$'s input: $s$ the serial number of a (lost) quantum banknote. $ssid$ the session identifier of a banknote-contract such that $ssid.\textsf{state} = (s, \text{``No active claim''})$.

\begin{itemize}
\item $P$ sends $(\textsf{Trigger}, ssid, \textsf{BanknoteLost}, d_0$) to $\mathcal{F}_{Ledg}$. This updates $ssid.\textsf{state}$ to $(s, \text{``Active claim by $pid$ at time $t$''})$
(where $t$ is the current time mantained by $\mathcal{F}_{Ledg}$), and deposits $d_0$ coins into the contract.
\item After time $t_{tr}$, $P$ sends $(\textsf{Trigger}, ssid, (\textsf{ClaimUnchallenged}, s'), 0)$ to $\mathcal{F}_{Ledg}$. If $P$ was honest then $ssid.\textsf{state}$ is updated to $(s', \text{``No active claim''})$, and $d_0$ coins are released to $P$.
\end{itemize}

\rule[2ex]{16.5cm}{0.5pt}\vspace{-.5cm}
\caption{Protocol for changing the serial number of a smart contract}
  \label{fig: protocol making a claim}
  
\end{figure}

Next, we give a protocol carried out by all parties to prevent malicious attempts at changing the state variable $\textsf{serial}$ of a smart contract. Informally, this involves checking the blockchain regularly for malicious attempts at filing lost-banknote claims. 

Recall that $t_{tr}$ was defined in Definition \ref{def: banknote-contract}. Recall also the definitions of $\mathcal{A}_*$ and $H$ from Definition \ref{def: extra property}.

\begin{figure}[H]
\rule[1ex]{16.5cm}{0.5pt}

Protocol carried out by a party $P$ to prevent malicious attempts at changing the state variable $\textsf{serial}$ of a smart contract.\\

Input of $P$: A triple $(\ket{\Psi}, s, ssid)$, where $\ket{\Psi}$ is a quantum banknote with serial number $s$, and $ssid$ is the session identifier of a banknote-contract such that $ssid.\textsf{state} = (s, \text{``No active claim''})$ \\

At regular intervals of time $t_r - 1$, do the following: 
\begin{itemize}
\item Send a message (\textsf{RetrieveContract}, $ssid$) to $\mathcal{F}_{Ledg}$. Upon receiving a message (\textsf{RetrieveContract}, $ssid$, $z$) from $\mathcal{F}_{Ledg}$, if $z =(\textsf{Params}, (s, \text{``Claim by $pid'$ at time $t$ ''}), d)$ for some $pid', t, d$ and for some banknote-contract parameters $\textsf{Params}$:
\begin{itemize}
    \item Apply $\mathcal{A}_*$ to $\ket{\Psi}$ to obtain $x \in \{0,1\}^{\lambda}$ (such that $H(x) = s$).
    \item Sample $\ket{\Psi} \leftarrow \textnormal{Gen}$. Let $s' := \textnormal{Ver}(\ket{\Psi})$.
    \item Send $(\textsf{Trigger},ssid, (\textsf{ChallengeClaim}, x, s'), 0)$ to $\mathcal{F}_{Ledg}$. (If $P$ was honest, this updates $ssid.\textsf{state} \leftarrow (s', \text{``No active claim''})$ and releases $d_0$ coins to $P$).
\end{itemize} 
\end{itemize}

\rule[2ex]{16.5cm}{0.5pt}\vspace{-.5cm}
\caption{Protocol for preventing malicious attempts at changing the state variable $\textsf{serial}$ of a smart contract. }
  \label{fig: protocol preventing}
  
\end{figure}

\paragraph{Trading a quantum banknote for coins:} Finally, we describe a protocol for trading a quantum banknote to recover all the coins deposited in its associated banknote-contract. 

\begin{figure}[H]
\rule[1ex]{16.5cm}{0.5pt}

Protocol carried out by a party $P$.\\

Input of $P$: A tuple $(\ket{\Psi}, s, ssid, d)$, where $\ket{\Psi}$ is a quantum banknote with serial number $s$, and $ssid$ is the session identifier of a banknote-contract such that $ssid.\textsf{state} = (z, \text{``No active claim''})$ and $ssid.\textsf{coins} = d$.\\ 

\begin{itemize}
\item Run $\mathcal{A_*}$ on $\ket{\Psi}$ to get outcome $x \in \{0,1\}^{\lambda}$ such that $H(x) = s$. 
\item Send message $(\textsf{Trigger}, ssid, (\textsf{RecoverCoins}, x), 0)$ to $\mathcal{F}_{Ledg}$. This releases $d$ coins to $P$.
\end{itemize}

\rule[2ex]{16.5cm}{0.5pt}\vspace{-.5cm}
\caption{Protocol for trading a quantum banknote for coins.}
  \label{fig: protocol trading banknotes}
  
\end{figure}

\section{Security}
\label{sec: security}

We first specify an adversarial model. Security with respect to this adversarial model is formally captured by Theorem \ref{thm: security}. At a high-level, Theorem \ref{thm: security} establishes that, within this adversarial model, no adversary can increase his ``value'' beyond what he has legitimately spent or received to and from honest parties. This captures, for example, the fact that the adversary will not be able to double-spend his banknotes, or successfully file a ``lost banknote claim'' for banknotes he does not legitimately possess.

\paragraph{Adversarial model} 

We assume that all the messages of honest parties are sent using the ideal functionality for authenticated communication $\mathcal{F}_{auth}$, and that the adversary sees all messages that are sent (in UC language, we assume that the adversary is activated every time a party sends a message) but has no control over the delivery of messages (whether they are delivered or not) and their timing. Our payment system can be made to work also if we assume that the adversary can delay delivery of honest parties' messages by a fixed amount of time (see the remark preceding Fig. \ref{fig: protocol making a claim} for more details), but, for simplicity, we do not grant the adversary this power.

The adversary can corrupt any number of parties, and it may do so adaptively, meaning that the corrupted parties are not fixed at the start, but rather an honest party can become corrupted, or a corrupted party can return honest, at any point. The process of corruption is modeled analogously as in the original UC framework, where the adversary simply writes a \textsf{corrupt} message on the incoming tape of an honest party, upon which the honest party hands all of its information to the adversary, who can send messages on the corrupted party's behalf. Our setting is slightly more involved in that corrupted parties also possess some quantum information, in particular the quantum banknotes. We assume that when an adversary corrupts a party he takes all of its quantum banknotes. Importantly, we assume that these are not returned to the party once the party is no longer corrupted. It might seem surprising that we do not upper bound the fraction of corrupted parties. Indeed, such a bound would only be needed in order to realize securely the ideal functionality $\mathcal{F}_{Ledg}$ (any consensus-based realization of $F_{Ledg}$ would require such a bound). Here, we assume access to such an ideal functionality, and we do not worry about its secure realization. Naturally, when replacing the ideal functionalities with real-world realizations one would set the appropriate bound on the corruption power of the adversary, but we emphasize that our schemes are independent of the particular real-world realization. Note that we do not fix a set of parties at the start, but rather new parties can be created (see below for more details).


We assume that (ITMs of) honest parties run the code $\pi$. This represents the ``honest'' code which executes the protocols from section \ref{sec: main} as specified. The input to $\pi$ then specifies when and which protocols from section \ref{sec: main} are to be executed. As part of $\pi$, we specify that, upon invocation, a party sends a message \textsf{Register} to $\mathcal{F}_{Ledg}$ to register itself. We also specify as part of $\pi$ that an honest party runs the protocol of Fig. \ref{fig: protocol preventing} (to prevent malicious claims for lost banknotes). Moreover, for notational convenience, we specify as part of $\pi$ that each party maintains a local variable $\textsf{banknoteValue}$, which keeps track of the total value of the quantum banknotes possessed by the party. $\textsf{banknoteValue}$ is initialized to $0$, and updated as follows. Whenever a party $P$ successfully receives a quantum banknote (i.e. $P$ is the payee in the protocol from Fig. \ref{fig: protocol making a payment} and does not abort) of value $d$ (i.e. the associated smart contract has $d$ coins deposited), then $P$ updates $\textsf{banknoteValue} \leftarrow \textsf{banknoteValue} +d$. Similarly, when $P$ sends a quantum banknote of value $d$, it updates $\textsf{banknoteValue} \leftarrow \textsf{banknoteValue} -d$.
Finally, we specify also as part of $\pi$, that whenever a party that was corrupted is no longer corrupted, it resets $\textsf{banknoteValue} = 0$ (this is because we assumed that quantum banknotes are not returned by the adversary). The following paragraph leads up to a notion of security and a security theorem.

Let $\mathcal{A}$ be a quantum polynomial-time adversary and $\mathcal{E}$ a quantum polynomial-time environment. Consider an execution of $\pi$ with adversary $\mathcal{A}$ and environment $\mathcal{E}$ (see section \ref{sec: uc} for more details on what an ``execution'' is precisely). We keep track of two quantities during the execution, which we denote as \textsf{AdversaryValueReceived} and \textsf{AdversaryValueCurrentOrSpent} (These quantities are not computed by any of the parties, adversary or environment. Rather, they are just introduced for the purpose of defining security). The former represents the amount of value, coins or banknotes, that the adversary has received either by virtue of having corrupted a party, or by having received a payment from an honest party. The latter counts the total number of coins currently possessed by corrupted parties, as recorded on $\mathcal{F}_{Ledg}$, and the total amount spent by the adversary to honest parties either via coins or via quantum banknotes (it does not count the value of quantum banknotes currently possessed; these only count once they are successfully spent). Both quantities are initialized to $0$, and updated as follows throughout the execution:
\begin{itemize}
    \item[(i)] When $\mathcal{A}$ corrupts a party $P$: let $d$ be the number of coins of $P$ according to the global functionality $\mathcal{F}_{Ledg}$ and $d'$ be $P$'s \textsf{banknoteValue} just before being corrupted. Then, $\textsf{AdversaryValueReceived} \leftarrow \textsf{AdversaryValueReceived} + d + d'$, and $\textsf{AdversaryValueCurrentOrSpent} \leftarrow \textsf{AdversaryValueCurrentOrSpent} + d$.
    \item[(ii)] When a corrupted party $P$ with $d$ coins and $\textsf{banknoteValue} = d'$ ceases to be corrupted and returns honest, $\textsf{AdversaryValueReceived} \leftarrow \textsf{AdversaryValueReceived} - d$.
    \item[(iii)] When an honest party pays $d$ coins to a corrupted party, $\textsf{AdversaryValueReceived} \leftarrow \textsf{AdversaryValueReceived} + d$. Likewise, when an honest party sends a quantum banknote of value $d$ to a corrupted party, through the protocol of Fig. \ref{fig: protocol making a payment}, then (even if the corrupted party does not return \textsf{accept}) 
    $\textsf{AdversaryValueReceived} \leftarrow \textsf{AdversaryValueReceived} + d$.
    \item[(iv)] When $\mathcal{A}$ succesfully spends a quantum banknote of value $d$ to an honest party $P$, i.e. a corrupted party is the payer in the protocol from Fig. \ref{fig: protocol making a payment} and $P$ is the payee and returns \textsf{accept}, or when $\mathcal{A}$ pays $d$ coins to an honest party, then $\textsf{AdversaryValueCurrentOrSpent} \leftarrow \textsf{AdversaryValueCurrentOrSpent} + d$.
    \item[(v)]
    When a corrupted party receives $d$ coins from a banknote-contract, then $\textsf{AdversaryValueCurrentOrSpent} \leftarrow \textsf{AdversaryValueCurrentOrSpent} + d$. Notice that this can happen only in two ways: $\mathcal{A}$ successfully converts a quantum banknote of value $d$ to coins on $\mathcal{F}_{Ledg}$ (via the protocol of Fig. \ref{fig: protocol trading banknotes}), or a corrupted party successfully challenges a \textsf{BanknoteLost} claim (in this case $d = d_0$).
\end{itemize}
Intuitively, if our payment scheme is secure, then at no point in time should the adversary be able to make $\textsf{AdversaryValueCurrentOrSpent} - \textsf{AdversaryValueReceived} > 0$. This would mean that he has successfully spent/stolen value other than the one he received by virtue of corrupting a party or receiving honest payments. The following theorem formally captures this notion of security. First, we denote by $\mathcal{F}_{Ledg}\text{-}\textrm{EXEC}^{(MaxNetValue)}_{\pi, \mathcal{A}, \mathcal{E}}(\lambda, z)$ the maximum value of $\textsf{AdversaryValueCurrentOrSpent} - \textsf{AdversaryValueReceived}$ during an execution of $\pi$ with adversary $\mathcal{A}$ and environment $\mathcal{E}$, with global shared functionality $\mathcal{F}_{Ledg}$.

\begin{theorem}[Security]
\label{thm: security}
For any quantum polynomial-time adversary $\mathcal{A}$ and quantum polynomial-time environment $\mathcal{E}$, 
$$ \Pr[\mathcal{F}_{Ledg}\text{-}\textrm{EXEC}^{(MaxNetValue)}_{\pi, \mathcal{A}, \mathcal{E}}(\lambda, z) > 0] = negl(\lambda).$$
\end{theorem}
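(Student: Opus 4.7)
My plan is to prove the theorem by contradiction, reducing any adversary that causes $D := \textsf{AdversaryValueCurrentOrSpent} - \textsf{AdversaryValueReceived} > 0$ to either an attacker against quantum lightning security or an attacker against property $(II)$ of Definition \ref{def: extra property}. Let $\tau$ be the first time at which $D > 0$. Inspecting rules (i)--(v), the only events that can strictly increase $D$ involve the adversary either (iv) successfully spending a bolt or coins to an honest party, or (v) receiving coins out of a banknote-contract via \textsf{RecoverCoins} or a successful \textsf{ChallengeClaim}. The pure-coin sub-case of (iv) is controlled by $\mathcal{F}_{Ledg}$'s balance accounting (the corresponding $+d$ was already recorded in $\textsf{AdversaryValueReceived}$ through (i) or (iii)), so the triggering event at $\tau$ must involve a quantum bolt produced by $\mathcal{A}$ or a classical certificate $x$ supplied by $\mathcal{A}$.

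To pin down when such an event is ``illegitimate'', I would maintain a bookkeeping relation that, at every point of the execution, assigns each live banknote-contract either to a single honest party or to the adversary. A contract is assigned at registration time according to whether its creator is corrupt; it is re-assigned when the bolt is successfully transferred via Fig. \ref{fig: protocol making a payment}, and when a serial number is replaced through \textsf{ChallengeClaim} or \textsf{ClaimUnchallenged}. The update rules (i) and (iii) guarantee that whenever a value-$d$ contract flips to the adversarial side, $\textsf{AdversaryValueReceived}$ is bumped by at least $d$. The invariant to prove is then that, conditioned on no violation of quantum lightning or bolt-to-certificate security, $\textsf{AdversaryValueCurrentOrSpent}$ is at most $\textsf{AdversaryValueReceived}$ plus the total value of contracts currently assigned to the adversary; hence if $D$ becomes positive, the event at $\tau$ must be applied to a contract still assigned to an honest party.

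Each such illegitimate event yields an efficient attack. If $\mathcal{A}$ spends a bolt to an honest payee who accepts serial $s$, while by the bookkeeping some honest party is holding a bolt that verifies to $s$ with overwhelming probability, then a reduction that simulates the entire execution of $\pi$ with $\mathcal{A}$ and $\mathcal{E}$ can harvest both bolts at time $\tau$ and run $\textnormal{Ver}\otimes\textnormal{Ver}$, producing two valid states with serial $s$ and contradicting quantum lightning security. If instead $\mathcal{A}$'s triggering message is $(\textsf{RecoverCoins},x)$ or $(\textsf{ChallengeClaim},x,s')$ against a contract with serial $s$ that is honestly held, the reduction simulates the execution, extracts the honest party's bolt $\ket\psi$ with $\textnormal{Ver}(\ket\psi)=s$ and $\mathcal{A}$'s string $x$ (for which $H(x)=s$), and outputs $(x,\ket\psi)$, winning $\textsf{Game}_2$ and contradicting property $(II)$ of Definition \ref{def: extra property}. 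A union bound over the polynomially many contracts and protocol rounds in the execution concludes that the bad event has negligible probability.

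The hardest part, I expect, is making the ownership bookkeeping precise and stable under adaptive corruption (and ``uncorruption'') and under the interplay between adversarial \textsf{BanknoteLost} claims and the honest-party countermeasure of Fig. \ref{fig: protocol preventing}. In particular, one must argue that the polling interval $t_r$ and the delay $t_{tr}$ are configured so that at least one honest holder of a valid bolt will respond with \textsf{ChallengeClaim} before any malicious \textsf{ClaimUnchallenged} can fire, and that any label transition from ``adversarial'' back to ``honest'' is accompanied by a decrease in $D$ (captured, for instance, by the sub-case in rule (iv) where $\mathcal{A}$ actually pays coins or a bolt to an honest party). Once this bookkeeping is in place, the three reductions outlined above are routine, and combining them via the quantum GUC composition machinery of \cite{unruh2010universally, canetti2007global} yields the claimed $\textnormal{negl}(\lambda)$ bound.
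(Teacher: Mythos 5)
Your proposal matches the paper's own argument: both proceed by contradiction, observe that only events of types (iv) and (v) can make $\textsf{AdversaryValueCurrentOrSpent} - \textsf{AdversaryValueReceived}$ positive, and reduce a double-spend to breaking quantum lightning security (two states with the same serial number both passing $\textnormal{Ver}$) and an illegitimate $\textsf{RecoverCoins}$ or $\textsf{ChallengeClaim}$ to winning $\textsf{Game}_2$ of Definition \ref{def: extra property}. Your explicit ownership-bookkeeping invariant and union bound are a more careful formalization of what the paper states informally, but the underlying decomposition and reductions are the same.
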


The rationale behind considering executions of $\pi$ and quantifying over all possible adversaries and environments is that doing so captures all possible ways in which a (dynamically changing) system of honest parties running our payment system alongside an adversary  can behave (where the adversary respects our adversarial model).

Recall that, in an execution of $\pi$, the environment has the ability to invoke new parties and assign to them new unique PIDs. Since in $\mathcal{F}_{Ledg}$ the PIDs are used to register parties and initialize their number of coins, this means that the environment has the ability to pick the initial number of coins of any new party that it invokes. Moreover, by writing inputs to the parties input tapes, the environment can instruct honest parties to perform the honest protocols from section \ref{sec: main} in any order it likes. Quantifying over all adversaries and environments, in the statement of Theorem \ref{thm: security} means that the adversary and the environment can intuitively be thought of as one single adversary. The statement of the theorem thus captures security against realistic scenarios in which new parties can be adversarially created with an adversarially chosen number of coins, and they can be instructed to perform the honest protocols of the payment system from section \ref{sec: main}, in whatever sequence is convenient to the adversary. 

\begin{proof}[Proof of Theorem \ref{thm: security}]
The proof is straightforward, and we avoid being overly formal.

Suppose for a contradiction that there exists $\mathcal{A}$ and $\mathcal{E}$ such that
\begin{equation}
\label{eq: 1}
\Pr[\mathcal{F}_{Ledg}\text{-}\textrm{EXEC}^{(MaxNetValue)}_{\pi, \mathcal{A}, \mathcal{E}}(\lambda, z) > 0] \neq negl(\lambda).
\end{equation}

Then, we go through all of the possible ways that an adversary can increase its net value, i.e. increase the quantity $\textsf{AdversaryValueCurrentOrSpent} - \textsf{AdversaryValueReceived}$: the adversary can do so through actions from items (ii), (iv) and (v) above. Amongst these, it is easy to see that action (ii) never results in $\textsf{AdversaryValueCurrentOrSpent} - \textsf{AdversaryValueReceived} > 0$. Thus, in order for \eqref{eq: 1} to hold, it must be the case that one of the following happens with non-negligible probability within an execution of $\pi$ with adversary $\mathcal{A}$ and environment $\mathcal{E}$.
\begin{itemize}
\item An action from item (iv) resulted in a positive net value for $\mathcal{A}$, i.e. $\textsf{AdversaryValueCurrentOrSpent} - \textsf{AdversaryValueReceived} > 0$. Notice that for this to happen it must be the case that $\mathcal{A}$ has double-spent a banknote, i.e. $\mathcal{A}$ has produced two banknotes with the same serial number that have both been accepted by honest parties in a payment protocol of Fig. \ref{fig: protocol making a payment}, and so they have both passed verification. But then, it is straightforward to see that we can use this adversary, together with $\mathcal{E}$ to construct an adversary $\mathcal{A}'$ that breaks the security of the quantum lightning scheme: $\mathcal{A}'$ simply simulates an execution of protocol $\pi$ with adversary $\mathcal{A}$ and environment $\mathcal{E}$, and with non-negligible probability the adversary $\mathcal{A}$ in this execution produces two banknotes with the same serial number. $\mathcal{A}'$ uses these banknotes to win the security game of quantum lightning. 
\item An action from item (v) resulted in a positive net value for $\mathcal{A}$. Then, notice that for this to happen it must be that either: 

\begin{itemize}
    \item $\mathcal{A}$ has sent a message $(\textsf{Trigger}, ssid, (\textsf{RecoverCoins}, x), 0)$ to $\mathcal{F}_{Ledg}$ for some $ssid$ and $x$ such that $H(x) =s$, where $ssid.\textsf{state} = (s, \text{``No active claim''})$, and the last ``make a payment'' protocol (from Fig. \ref{fig: protocol making a payment}) referencing $ssid$ had an honest party as payee which remained honest at least up until after $\mathcal{A}$ sent his message (or the banknote-contract was initialized by an honest user and the banknote was never spent). But then, one of the following must have happened: 
\begin{itemize}
\item $\mathcal{A}$ possessed a bolt $\ket{\Psi}$ with serial number $s$ at some point, before $\ket{\Psi}$ was spent to the honest user. Then, this adversary would have recovered a good $x$ and also spent a bolt with serial number $s$ successfully to an honest user. But such an $\mathcal{A}$, together with $\mathcal{E}$, can be used to win $\textsf{Game}_{2}$ from Definition \ref{def: extra property} with non-negligible probability, with a similar reduction to the one above, thus violating the property of Definition \ref{def: extra property}. 
\item $\mathcal{A}$ recovered $x$ such that $H(x) = s$ without ever possessing a valid bolt with serial number $s$. Again, such an adversary could be used, together with $\mathcal{E}$ to win $\textsf{Game}_{2}$ from Definition \ref{def: extra property}).
\item $\mathcal{A}$ has successfully changed the serial number of contract $ssid$ to $s$ from some previous $s'$ without possessing a bolt $\ket{\Psi}$ with serial number $s'$. This cannot happen since any honest user who possesses the valid bolt with serial number $s'$ performs the protocol of Fig. \ref{fig: protocol preventing}.
\end{itemize}

\item $\mathcal{A}$ has sent a message $(\textsf{Trigger}, ssid, (\textsf{ChallengeClaim}, x), 0)$ to $\mathcal{F}_{Ledg}$ for some $ssid$ such $ssid.\textsf{state} = (s,\text{``Claim by $pid$ at time $t$''})$ for some $s, pid, t$ with $pid$ honest and $x$ such that $H(x) =s$. Since $pid$ is honest, he must be the last to have possessed a valid bolt with serial number $s$. Then, there are two possibilities:
\begin{itemize}
\item $\mathcal{A}$ never possessed a valid bolt with serial number $s$, and succeeded in recovering $x$ such that $H(x) =s$. Analogously to earlier, this adversary, together with $\mathcal{E}$, can be used to win $\textsf{Game}_{2}$.
\item $\mathcal{A}$ possessed a bolt $\ket{\Psi}$ with serial number $s$ at some point, before $\ket{\Psi}$ was spent to an honest user. Analogously to earlier, this means such an $\mathcal{A}$ both recovered an $x$ with $H(x) = s$ and spent a bolt with serial number $s$ successfully. Such an $\mathcal{A}$ can be used, together with $\mathcal{E}$, to win $\textsf{Game}_{2}$. 
\end{itemize}

\end{itemize} 

\end{itemize}

\end{proof}

\paragraph{Other attacks that do not allow the adversary to gain value, but disrupt honest parties} 
We briefly comment on the possibility of attacks which do not make the adversary profit, but may disrupt honest parties. The security definition for quantum lightning, or the security proved in Theorem \ref{thm: security}, do not directly exclude a scenario in which an adversary is able to dishonestly generate a quantum banknote, spend it succesfully with non-negligible probability, but when the honest party who received it tries to spend it at a later time verification will fail, or she will fail to obtain a valid classical certificate from it. Such issues ought to be considered in general, but fortunately they do not arise when the underlying quantum lightning scheme has a verification procedure which is a rank-1 projection, since any banknote that passes verification is automatically projected onto an honestly generated banknote. This is the case for the candidate construction based on the multi-collision resistance of certain degree-2 hash functions of \cite{zhandry2017quantum}.









\section{Conclusion}
In this work, we gave the first example of the use of classical smart contracts in conjunction with quantum cryptographic tools. We showed that smart contracts can be combined with quantum tools, in particular quantum lightning, to design a decentralized payment system which solves the problem of scalability of (payment) transactions. There is currently only one known secure construction of quantum lightning, which relies on a computational assumption about multi-collision resistance of certain degree-2 hash functions \cite{zhandry2017quantum}. Finding alternative constructions of quantum lightning, secure under more well-studied computational assumptions, is a very interesting open problem.

Smart contracts have found several applications in classical cryptographic tasks, but their application to quantum cryptographic tasks is virtually unexplored. We hope that this work will ignite future investigations. Some candidate tasks which might potentially benefit from smart contracts are: generation of public trusted randomness, distributed delegation of quantum computation, secure multi-party quantum computation. 

\section*{Acknowledgements}
The author thanks Andru Gheorghiu, Abel Molina, Mario Larangeira, Or Sattath and Thomas Vidick for valuable comments and discussions on earlier versions of this work. The author especially thanks Or Sattath for pointing out an imprecision in the previous version of Definition \ref{def: extra property}. The author also thanks Ran Canetti, Ranjit Kumaresan and Dominique Unruh for helpful email exchanges. The author is supported by the Kortschak Scholars program and AFOSR YIP award number FA9550-16-1-0495.

\bibliographystyle{alpha}
\bibliography{references}

\appendix
\section{Appendix}
\subsection{Proof of Proposition \ref{prop: extra property 2}}
\label{sec: appendix}
\begin{proof}[Proof of Proposition \ref{prop: extra property 2}]
We assume some familiarity with Zhandry's construction (see section 6 of \cite{zhandry2017quantum} for more details). In his construction, a full bolt is a tensor product of $n$ mini-bolts. A valid mini-bolt with serial number $y$ takes the form $\ket{\Psi}^{\otimes (k+1)}$ where $\ket{\Psi}$ is a superposition of pre-images of $y$ under a certain function $f$ (this is called $f_{\mathcal{A}}$ in Zhandry's paper, and we do not go into the details of what this function is). The serial number of the full bolt is the concatenation of the serial numbers of the mini-bolts. The computational assumption under which Zhandry's construction is proved secure is that $f$ is $(2k+2)$-multi-collision resistant, i.e. it is hard to find $2k+2$ colliding inputs (for this particular function it is easy to find $k+1$ on the other hand). Let $\tilde{H}$ be the function that takes as input $k+1$ pre-images $z_{1},..,z_{k+1}$ and outputs $f(z_i)$ if this value is the same for all $i$, and outputs $\perp$ otherwise. Take $H := \tilde{H}^{\times n}$. 

For property $(I)$, $\mathcal{A}_*$ again just measures all registers in the computational basis. For property $(II)$, similarly to the proof of Proposition \ref{prop: extra property 1}, we can construct an adversary $\mathcal{A}'$ that breaks the $(2k+2)$-multi-collision resistance of $f$ from an adversary $\mathcal{A}$ that wins $\textsf{Game}_{12}$ with non-negligible probability: $\mathcal{A}'$ runs $(\textnormal{Gen}, \textnormal{Ver}, H) \leftarrow \textsf{QL.Setup}(1^\lambda)$ and samples $\ket{\psi}\leftarrow \textnormal{Gen}$. $\mathcal{A}'$ gives $(\textnormal{Gen}, \textnormal{Ver}, H, \ket{\psi})$ as input to $\mathcal{A}$. $\mathcal{A}$ returns $x$ which is parsed as $(x_1,..,x_n)$, where each $x_i$ is a $(k+1)$-tuple, and $\ket{\psi'}$. $\mathcal{A}'$ then measures each of the $n$ registers of $\ket{\psi'}$ to get $n$ $(k+1)$-tuples $ (x'_1,..,x'_n)$. If there is some $i$ such that $(x_i, x_i')$ is a $(2k+2)$-collision, then $\mathcal{A}'$ outputs this. With non-neglibile probability $\mathcal{A}'$ outputs a $(2k+2)$-collision: in fact, since $\mathcal{A}$ wins $\textsf{Game}_{12}$ with non-negligible probability, then $\ket{\psi'}$ must pass Ver with non-negligible probability; from the analysis of Zhandry's proof, we know that any full bolt that passes verification with non-negligible probability must be such that most mini-bolts have non-negligible weight on most pre-images. Since $k$ can be taken to be constant in the size of the security parameter (it is even possible to take $k=1$), then, for any $i$, there is a non-negligible probability that the $k+1$ entries of $x_i$ are all distinct from the entries of $x_i'$, which gives a $(2k+2)$-collision.
\end{proof}

\end{document}